\documentclass[11pt]{article}
\pdfoutput=1
\usepackage[margin=1in]{geometry}
\usepackage[onehalfspacing]{setspace}
\usepackage{graphicx,amssymb}
\usepackage{amsmath,mathtools,amsthm}
\usepackage{float}
\usepackage{xfrac}
\usepackage{enumerate,multicol,multirow}
\usepackage{caption,subcaption}
\usepackage{hyperref}
\usepackage[capitalize]{cleveref}
\usepackage{algorithm}
\usepackage{algorithmic}
\usepackage{tikz}
\usetikzlibrary{decorations.pathreplacing}

\usepackage{thm-restate}
\usepackage{natbib}
\usepackage{xcolor}

\usepackage{color-edits}
\addauthor{yl}{red}    
\addauthor{gx}{blue}

\usepackage{todonotes}

\newtheorem{assumption}{Assumption}
\newtheorem{definition}{Definition}

\newtheorem{theorem}{Theorem}
\newtheorem{lemma}{Lemma}

\newtheorem{proposition}{Proposition}

\newtheorem{example}{Example}
\newtheorem{remark}{Remark}

\newcommand{\notshow}[1]{{}}

\newcommand{\Xcomment}[1]{{}}

\newcommand{\given}{\,\vert\,}

\newcommand{\prob}[2][]{\text{\bf Pr}\ifthenelse{\not\equal{}{#1}}{_{#1}}{}\!\left[{\def\givenn{\middle|}#2}\right]}
\newcommand{\expect}[2][]{\text{\bf E}\ifthenelse{\not\equal{}{#1}}{_{#1}}{}\!\left[{#2}\right]}

\newcommand{\tparen}{\big}
\newcommand{\tprob}[2][]{\text{\bf Pr}\ifthenelse{\not\equal{}{#1}}{_{#1}}{}\tparen[{\def\given{\tparen|}#2}\tparen]}
\newcommand{\texpect}[2][]{\text{\bf E}\ifthenelse{\not\equal{}{#1}}{_{#1}}{}\tparen[{\def\given{\tparen|}#2}\tparen]}

\newcommand{\sprob}[2][]{\text{\bf Pr}\ifthenelse{\not\equal{}{#1}}{_{#1}}{}[#2]}
\newcommand{\sexpect}[2][]{\text{\bf E}\ifthenelse{\not\equal{}{#1}}{_{#1}}{}[#2]}

\newcommand{\indicate}[1]{\mathbf{1}\left[#1\right]}

\newcommand{\regret}{{\rm Reg}}
\newcommand{\robust}{{\rm RMM}}
\newcommand{\oih}{{\rm OIH}}

\newcommand{\edp}{\hat{\mathbf{P}}}

\title{Robust Aggregation of Calibrated Forecasts}

\author{Xinxiang Guo \and Yingkai Li \and Yifen Mu}

\begin{document}
\maketitle

\begin{abstract}
    Decision-makers often rely on multiple probabilistic forecasts that are individually calibrated but need not be fully informative. We develop a framework for aggregating such forecasts when the decision-maker knows only that experts satisfy calibration. We show that the joint distribution of calibrated forecasts can contain decision-relevant information that is unavailable from any single expert, so the standard optimal-in-hindsight (OIH) benchmark may substantially understate attainable performance. To formalize this idea, we introduce a robust max-min benchmark: the best payoff a decision-maker can guarantee against all profile-wise conditional-mean mappings compatible with calibration. This benchmark is tractable, admits a linear-programming formulation, and dominates the OIH benchmark up to calibration error. It can nevertheless be strictly below the Bayesian benchmark, clarifying the value of knowing experts' information structures. Finally, we provide online algorithms that attain the robust benchmark under forecast-only feedback and stronger contextual benchmarks under state feedback.
\end{abstract}
\noindent \textbf{Keywords:} Information Aggregation, Online Decision-Making, Calibration, Contextual Online Learning

\section{Introduction}\label{sec:introduction}

Probabilistic forecasts are increasingly used to guide decisions in settings such as weather prediction, medical diagnosis, financial investment, and policy evaluation. In these environments, a decision-maker often observes reports from multiple forecasters---experts, statistical models, or institutions---and must translate these reports into payoff-relevant actions. A natural minimal requirement on such forecasts is calibration: when a forecaster repeatedly reports probability $p$, the event should occur with frequency approximately $p$. Calibration is attractive because it is model-free and does not require the decision-maker to know the forecaster's information structure.

Calibration, however, is not the same as informativeness \citep[see, e.g.,][]{hu2024predict}. A forecaster who always reports the unconditional frequency of an event may be perfectly calibrated but useless for choosing state-contingent actions. This limitation is especially salient when forecasts are evaluated one at a time. In many applications, the decision-maker observes several calibrated forecasts simultaneously. The joint pattern of these forecasts may contain information that is absent from any individual report. For example, two forecasts that are each only partially informative may, through their correlation with each other and with the state, jointly reveal substantially sharper information about the relevant posterior belief. The central question of this paper is whether, and how, a decision-maker can exploit such information using calibration alone.


This question is also about the appropriate benchmark for online decision-making with forecasts. A standard benchmark in prediction with expert advice is the optimal-in-hindsight benchmark, which compares the decision-maker with the decision rule based on a single expert's reports---that is, the best mapping from one expert's report space to actions---selected in hindsight. This benchmark is natural when the learner's task is to select among experts. It is less compelling when the learner observes all forecasts and can aggregate them. If calibrated forecasts jointly encode decision-relevant information, then a decision rule that conditions on the full forecast profile may systematically outperform every rule that uses only one expert's report. In such cases, the optimal-in-hindsight benchmark understates what is achievable from the information available to the decision-maker.


A simple example illustrates the force of aggregation. Consider a weather-prediction problem over $T$ days. Expert~1 reports a probability of $1/3$ in the first $2T/3$ days and a probability of $2/3$ in the remaining $T/3$ days, while Expert~2 reports a probability of $2/3$ in the first $T/3$ days and a probability of $1/3$ in the remaining $2T/3$ days. See \Cref{fig:toy_example} for an illustration of this example. Suppose both experts are calibrated. Taken separately, each forecast partitions the data only coarsely. Taken together, however, the two calibration constraints identify a sharper posterior: the implied probability of rain is $2/3$ in the first third, $0$ in the middle third, and $2/3$ in the last third. 
To see why this matters for decisions, suppose the decision-maker must choose whether to prepare for rain, which is optimal exactly when the probability of rain exceeds $1/2$. Conditioning on the joint profile, the decision-maker prepares in the first and last thirds and refrains in the middle third. No rule based on a single expert's report can replicate this: each expert reports $1/3$ on a block of days that pools the middle third, where rain never occurs, with another third where rain occurs with probability $2/3$, so any single-expert rule must take the same action across that entire block.
The joint pattern of reports therefore reveals payoff-relevant information that is unavailable from either expert alone. This example shows why a benchmark based on the best individual expert can be too weak: the decision-maker's real informational object is the forecast profile, not a single forecast.

\begin{figure}[t]
\centering
\begin{tikzpicture}[x=12cm,y=1cm, >=latex]
 
\def\yone{1.9}
\def\ytwo{0.0}

\node[anchor=east] at (-0.12,\yone) {Expert 1};

\draw[line width=1.2pt] (0,\yone) -- (1,\yone);

\draw[line width=1.2pt] (0,\yone+0.12) -- (0,\yone-0.12);
\draw[line width=1.2pt] (2/3,\yone+0.12) -- (2/3,\yone-0.12);
\draw[line width=1.2pt] (1,\yone+0.12) -- (1,\yone-0.12);

\node[anchor=north] at (0,\yone-0.18) {$0$};
\node[anchor=north] at (2/3,\yone-0.18) {$\frac{2T}{3}$};
\node[anchor=north] at (1,\yone-0.18) {$T$};

\draw[decorate,decoration={brace,amplitude=5pt,raise=6pt},line width=1pt]
  (0,\yone) -- (2/3,\yone)
  node[midway, yshift=18pt] {Predict $\frac{1}{3}$};

\draw[decorate,decoration={brace,amplitude=5pt,raise=6pt},line width=1pt]
  (2/3,\yone) -- (1,\yone)
  node[midway, yshift=18pt] {Predict $\frac{2}{3}$};

\node[anchor=east] at (-0.12,\ytwo) {Expert 2};

\draw[line width=1.2pt] (0,\ytwo) -- (1,\ytwo);

\draw[line width=1.2pt] (0,\ytwo+0.12) -- (0,\ytwo-0.12);
\draw[line width=1.2pt] (1/3,\ytwo+0.12) -- (1/3,\ytwo-0.12);
\draw[line width=1.2pt] (1,\ytwo+0.12) -- (1,\ytwo-0.12);

\node[anchor=north] at (0,\ytwo-0.18) {$0$};
\node[anchor=north] at (1/3,\ytwo-0.18) {$\frac{T}{3}$};
\node[anchor=north] at (1,\ytwo-0.18) {$T$};

\draw[decorate,decoration={brace,amplitude=5pt,raise=6pt},line width=1pt]
  (0,\ytwo) -- (1/3,\ytwo)
  node[midway, yshift=18pt] {Predict $\frac{2}{3}$};

\draw[decorate,decoration={brace,amplitude=5pt,raise=6pt},line width=1pt]
  (1/3,\ytwo) -- (1,\ytwo)
  node[midway, yshift=18pt] {Predict $\frac{1}{3}$};

\end{tikzpicture}
\caption{An illustrative example.}
\label{fig:toy_example}
\end{figure}

The example is deliberately stark: the two calibration constraints uniquely determine the relevant posterior belief for each forecast profile. In general, calibration need not be so informative. The same joint distribution of calibrated forecasts may be consistent with many different mappings from forecast profiles to state probabilities. Thus, calibration can reveal useful restrictions on posterior beliefs without identifying them completely. This creates a basic ambiguity for decision-making: when several posterior mappings are consistent with all observed calibration constraints, which one should the decision-maker use?


We resolve this ambiguity by adopting a robust perspective. For each forecast profile, we consider the set of posterior probabilities that are consistent with the calibration of all experts. A decision rule maps forecast profiles into actions, and its value is evaluated against the worst posterior mapping in this feasible set. The robust max-min benchmark is the highest payoff that can be guaranteed using only the restrictions imposed by calibration. This benchmark captures the informational content of calibrated forecasts without requiring the decision-maker to know the experts' signal structures, priors, or conditional independence relationships.

This benchmark has two useful interpretations. First, it is an aggregation benchmark: it allows the decision-maker to condition on the full profile of reports and therefore to exploit information contained in the joint pattern of forecasts. Second, it is a robustness benchmark: it does not assume that calibration identifies the true posterior, but instead asks what payoff is guaranteed uniformly over all posterior beliefs compatible with calibration. In this sense, the benchmark lies between two familiar extremes. It is more demanding than the optimal-in-hindsight benchmark, which restricts attention to one expert at a time, but it is generally less demanding than the Bayesian benchmark, which assumes knowledge of the full data-generating process.


Our first set of results studies this benchmark in the offline environment, where the full sequence of forecast profiles is fixed. We first show that the robust benchmark is a tractable object. Although the benchmark is defined as a max-min problem over decision rules and calibrated-consistent posterior mappings, calibration imposes only linear restrictions. By dualizing the inner minimization problem, we obtain an equivalent linear program. Thus, the benchmark can be computed in time polynomial in the number of forecast profiles, the number of actions, and the number of distinct expert reports (\cref{prop_compute_RMM}). This tractability result is important because the benchmark is not merely conceptual: it can be evaluated and used as a practical performance criterion.

The offline analysis also clarifies the structure of optimal robust aggregation. A priori, one might think that the decision-maker should use correlated randomizations across different forecast profiles, because the calibration constraints couple the posterior probabilities assigned to different profiles. We show that allowing such correlation provides no gain: independent mixed strategies, chosen separately for each forecast profile, achieve the same robust value as arbitrary correlated strategies over mappings from forecast profiles to actions (\cref{thm_indepen_strate_optimal}). This result justifies the simpler formulation of a decision rule as a collection of profile-contingent mixed actions.

We then compare the robust benchmark with two natural alternatives. Relative to the optimal-in-hindsight benchmark, robust aggregation is weakly stronger when forecasts are exactly calibrated, and remains stronger up to an error term proportional to the calibration error (\cref{prop:robust_vs_oih}). This formalizes the intuition from the motivating example: a decision-maker who observes all forecasts should not be evaluated only against the best single expert. At the same time, the robust benchmark is generally weaker than the Bayesian benchmark, which assumes knowledge of the prior and the experts' information structures. We show in \cref{ex_cond_indep} where this inequality is strict. The robust benchmark therefore occupies an intermediate position: it extracts all information guaranteed by calibration, but it does not attribute to the decision-maker information about the data-generating process that calibration alone does not reveal.


We next study whether robust aggregation can be learned online. This question is nontrivial because, under forecast-only feedback, the decision-maker observes the sequence of forecast profiles but neither the realized states nor the realized payoffs. Standard online-learning algorithms rely on outcome or payoff feedback to update decisions. Here, by contrast, learning must proceed entirely from the empirical distribution of forecasts. The relevant object to learn is not which action performed well in the past, but which robust decision rule is warranted by the calibration restrictions induced by the distribution of forecast profiles.

Our main online result shows that this is possible. We consider a forecast-only feedback environment in which forecast profiles are drawn from a fixed distribution, while the realized state sequence is otherwise unrestricted subject to calibration (\cref{thm_online_regret}). We propose a plug-in robust aggregation algorithm: at each period, the decision-maker estimates the distribution of forecast profiles from past observations and solves a penalized version of the robust max-min problem for the estimated distribution. The penalty accommodates the fact that finite-sample empirical distributions may not exactly satisfy the calibration restrictions. We prove that this algorithm achieves sublinear robust regret. Hence, even without observing states or payoffs, the decision-maker can asymptotically attain the robust benchmark using only the observed sequence of calibrated forecasts. 
Finally, we show that sublinear regret with respect to the robust benchmark is not possible without the stochastic forecast profile assumption (see \cref{subsec:impossibility}). 

Note that the forecast-only model captures environments in which outcomes are delayed, unavailable, or not directly attributable to individual decisions. In other settings, the decision-maker observes the realized state after acting. This additional information permits a stronger benchmark. Under state feedback, the forecast profile can be treated as a context, and observing the state allows the decision-maker to evaluate the payoff of every action for that context. We therefore compare the learner with the best context-dependent decision rule in hindsight. A simple contextual Hedge algorithm, run separately for each forecast profile, achieves sublinear regret relative to this benchmark (\cref{thm_contextual_regret_full_feedback}). Thus, the appropriate benchmark depends sharply on the feedback structure: forecast-only feedback calls for robust aggregation based on calibration, whereas state feedback permits the stronger contextual optimal-in-hindsight benchmark.

Taken together, these results provide a framework for decision-making with calibrated forecasts that separates three sources of value: the value of aggregating multiple calibrated reports, the value of robustness when calibration does not identify a unique posterior, and the value of feedback in online learning. Calibration alone is often viewed as too weak to support informative decisions. Our results show that, when multiple calibrated forecasts are observed jointly, calibration can nevertheless impose enough structure to support tractable and learnable aggregation rules that outperform standard best-expert benchmarks.

\subsection{Related Work}

\paragraph{Robust Forecast Aggregation}
Our work contributes to the literature on robust forecast aggregation, which studies how to combine multiple forecasts into a single forecast when the aggregator does not know the underlying information structure. The most direct predecessor is \citet{arieli2018robust}, who introduce a regret-based criterion for evaluating aggregation schemes relative to the Bayesian benchmark and construct low-regret aggregation schemes in two-expert settings with either Blackwell-ordered or conditionally independent signals. \citet{guo2025algorithmic} extend this line of work by developing an algorithmic framework for computing robust aggregation rules, with a primary focus on conditionally independent information structures. \citet{de2023robust} study robust aggregation of correlated information from a decision-theoretic perspective and characterize robustly optimal strategies in binary-state, binary-action problems. In a similar spirit, \citet{levy2022combining} study forecast combination when each expert's signal is understood individually but the correlation structure across experts is ambiguous.
\citet{neyman2022you} identify conditions under which simple aggregation methods obtain nontrivial guarantees, and \citet{frongillo2025robust} show that allowing the principal to ask additional structured queries can overcome some impossibility results in robust forecast aggregation.
More recently, \citet{chen2026prior} study \emph{prior-agnostic} robust aggregation, in which the aggregator observes only the experts' reports and knows neither the prior nor the underlying state space. A related strand relaxes the assumption that experts report truthful Bayesian posteriors, studying robust aggregation with adversarial experts \citep{guo2024robust} and with base-rate-neglecting experts \citep{kong2024surprising}.

Much of this literature studies Bayesian experts under specific restrictions on the information structures, priors, dependence, or elicitation protocols. In contrast, our work takes calibrated forecasters as primitives and studies the class of information structures induced by calibration. This leads to a different robustness problem: rather than assuming a particular dependence structure among signals, we ask how forecasts can be aggregated when calibration restricts but does not fully identify the underlying information structure. We further extend the analysis to online settings and design no-regret aggregation algorithms for this calibrated-forecaster benchmark.

\paragraph{Online Learning and Regrets} 
Our results are related to the literature on benchmarks and no-regret algorithms in online decision-making problem. The standard benchmark is the optimal-in-hindsight benchmark, which leads to the external regret, where the learner competes with the best fixed action, expert, or decision rule chosen in hindsight. No-external-regret guarantees can be achieved by Hedge \citep{freund1997decision}, follow-the-regularized-leader \citep{shalev2025online}, and online mirror descent \citep{zinkevich2003online} under full-gradient feedback, by EXP3 algorithms under adversarial bandit feedback \citep{auer2002nonstochastic}, and by EXP4 in contextual bandit problems \citep{auer2002nonstochastic,zhou2015survey}.

Another benchmark is swap regret, which compares against arbitrary action mappings. Internal regret concerns pairwise action switches, while swap regret permits arbitrary action mappings. These stronger regrets are closely connected to calibration and correlated equilibrium \citep{foster1997calibrated}, and can be achieved by regret-matching algorithms \citep{hart2000simple} or by reductions from external-regret minimization \citep{blum2007external}. 

Our benchmark is the robustly optimal aggregation rule subject to the information structures induced by calibration, and is weakly stronger than the optimal-in-hindsight benchmark. More importantly, our setting differs from standard online learning because the learner observes only forecasts. With forecast-only feedback, payoff-based learning algorithms cannot directly evaluate experts' reports or feasible actions, and therefore are not directly applicable. Our online algorithm is designed for this information constraint and attains no-regret guarantees with respect to the robust aggregation benchmark.

\paragraph{Calibration and Decision Guarantees}

Our work is also related to the literature on calibration and its improvements. The classical calibration literature begins with \citet{dawid1982well}'s notion of calibrated probability assessment; \citet{oakes1985self} shows the limitations of deterministic self-calibration, whereas \citet{foster1998asymptotic} establish that asymptotic calibration can be achieved through randomized forecasting procedures.
However, \citet{gneiting2007probabilistic} emphasize that calibration alone is not sufficient for useful prediction---sharpness, or informativeness, is also needed, which motivated refinements or improvements of calibration. 

Multicalibration strengthens ordinary calibration by requiring calibration not only conditional on the predictor's own forecast, but also over a rich collection of identifiable subpopulations \citep{hebert2018multicalibration}. Omniprediction builds on this idea from a decision-making perspective \citep{gopalan2022omnipredictors}, and recent work shows a close connection between multicalibration and omniprediction \citep{gopalan2023swap,garg2024oracle}. Our work shares with these two notions the view that calibration should be understood through its implications for downstream decision making. However, the object of study is different. Multicalibration and omniprediction primarily concern the construction of a single predictor whose forecasts are sufficiently informative for many downstream losses. In contrast, we take multiple calibrated forecasters as primitives and study how their reports can be robustly aggregated when the joint information structure is unknown. 
A related line of work studies calibration directly through its downstream decision value, including $U$-calibration \citep{kleinberg2023ucalibration} and swap-regret guarantees for downstream decision tasks \citep{hu2024predict}. These works concern how to produce or evaluate a single predictor so that downstream agents act well, whereas we study how a decision-maker aggregates several already-calibrated reports under an unknown information structure.

\paragraph{Statistical Forecast Pooling}
A more closely related line of work studies model-based aggregation of multiple probability forecasts. \citet{ranjan2010combining} show that linear pools of calibrated forecasts are generally uncalibrated and lack sharpness, highlighting that aggregating calibrated forecasts is itself a \textit{nontrivial} problem. \citet{satopaa2014combining} propose a simple logit model for combining expert forecasts and demonstrate strong empirical performance on geopolitical forecasting data. \citet{ernst2016bayesian} study forecast aggregation in the partial information framework and derive an explicit Gaussian aggregator for a one-shot problem with two forecasters. These works either impose specific statistical models on the forecast-generating process, focus on the two-forecaster case, or primarily provide empirical aggregation methods. In contrast, our work takes calibrated forecasters as primitives and studies robust aggregation over the class of information structures induced by calibration, without assuming a parametric model or restricting attention to two experts.

\paragraph{Notation} Given a positive integer $n$, let $[n]$ denote the set $\{1,2, \cdots, n\}$. Given a vector $v\in \mathbb{R}^n$, let $[v]_i$ denote its $i$-th element.  Given a finite set $S$, let $|S|$ denote the number of its elements. Given a sequence of vectors $v_1, v_2, \cdots, v_t$, we denote it by $v_{1:t}$. 

\section{Model}
\label{sec:model}

Consider a \textit{dynamic} decision-making problem $\Gamma = (\Omega, \mathcal{A}, u, T)$
with a binary state space $\Omega=\{0,1\}$ of nature, a finite action set $\mathcal{A}$ of size $|\mathcal{A}|=d$, a bounded payoff function $u:\mathcal{A}\times \Omega\to [-U, U]$, and a finite horizon $T\in\mathbb{N}_+$. 
At each day $t$, the decision-maker observes the available forecasts, chooses an action $a_t\in\mathcal{A}$, and the state $\omega_t\in\Omega$ is then revealed or withheld depending on the feedback model. 
The decision maker's payoff given any sequence of action-state pairs $\{(a_t,\omega_t)\}_{t=1}^T$ is 
\begin{align*}
    \sum_{t=1}^T u(a_t,\omega_t).
\end{align*}

\paragraph{Expert Forecasts}
The decision-maker has access to $n$ experts, each of whom makes forecasts about the probability of the event that the state of nature equals $1$, i.e., $\Pr(\omega=1)$. 
Let $e_t^i$ be the forecast made by expert $i$ at day $t$, and let 
$e_t = (e_t^1, \dots, e_t^n)$ be the entire forecast profile. 
Let $E^i$ be the collection of distinct reports made by expert $i$ over the time horizon $T$, and let $E$ denote their Cartesian product, i.e., $E = E^1\times E^2\times \cdots \times E^n$. Denote the size of $E$ by $m$. In online sections, we take these finite forecast alphabets as known before play.
In our model, we assume that all experts satisfy $\varepsilon_T$-calibration, where $\varepsilon_T$ represents the calibration error. 
That is, for any expert $i\in[n]$ and any $e^i\in E^i$, 
we have
\begin{equation}\label{eq_calib_deter}
    \left\vert \frac{\sum_{t\leq T} \indicate{\omega_t = 1}\cdot \indicate{e^i_t = e^i}}{\sum_{t\leq T} \indicate{e^i_t = e^i}} - e^i\right\vert \le \varepsilon_T. 
\end{equation}

\paragraph{Online Decisions and Regrets}
In this online decision problem, the decision maker chooses an action in each period based on the historical observations. In our model, we consider two types of feedback:
\begin{itemize}
    \item forecast-only feedback: at any time $t\leq T$, the decision maker only observes the forecasts from all experts from time $1$ to $t$.
    \item state feedback: at any time $t\leq T$, the decision maker observes the forecasts from all experts from time $1$ to $t$, and the realized state from time $1$ to $t-1$.
\end{itemize}

To measure the performance of the decision maker, there are two natural benchmarks (defined informally below). 
\begin{itemize}
    \item Optimal-in-hindsight: The optimal-in-hindsight benchmark is a standard and widely used benchmark, which is defined as the payoff obtained by the best report-to-action policy based on a single expert's forecasts and selecting, in hindsight, the expert whose reports yield the highest payoff.
    \begin{definition}[Optimal-in-Hindsight Benchmark]\label{def_OIH_benchmark}
        Let $U^i$ denote the optimal value achieved when the decision-maker uses only the reports from expert $i$ \footnote{Under exact calibration, the value of the best single-expert policy is determined by the expert's forecast sequence. With approximate calibration, different calibrated state sequences can lead to values that differ by the calibration error term.}, namely,
        \begin{equation*}
            U^i = \max_{\sigma:E^i\to \mathcal{A}} \sum_{t\le T} u(\sigma(e_t^i), \omega_t),
        \end{equation*}
        where $\omega_1,\ldots,\omega_T$ is the realized state sequence. Then, the optimal-in-hindsight benchmark is defined as
        \begin{equation}
            \oih := \max_{i\in [n]} U^i.
        \end{equation}
    \end{definition}

    \item Robust-maxmin: The robust-maxmin benchmark is the maximal payoff that can be guaranteed by a mixed strategy over action sequence space against all calibration-compatible state sequence.

    \begin{definition}[Robust-Maxmin Benchmark]\label{def_RMM_deter}
        Let $\Omega(e_{1:T})$ be the set of state sequences that satisfy the calibration constraints \eqref{eq_calib_deter} for all experts. Then, the robust-maxmin benchmark is defined as 
        \begin{equation}\label{eq_origin_def_RMM}
            \robust :=
            \max_{x\in (\Delta(\mathcal{A}))^m}
            \min_{\omega_{1:T}\in \Omega(e_{1:T})}
            \mathbb{E}\left[
                \sum_{t\leq T} u(a_t,\omega_t)
            \right],
        \end{equation}
        where $a_t\sim x(e_t)$ with $x(e_t)\in\Delta(\mathcal{A})$, and the expectation is taken only over this randomization. 
        
    \end{definition}
\end{itemize}

In our paper, we focus primarily on the robust maxmin benchmark. Given this benchmark, the regret of the decision maker given any realized sequence of actions $a^*_1,\dots,a^*_T$ is 
\begin{align*}
    \regret = \robust - \min_{\omega_{1:T}\in \Omega(e_{1:T})} \sum_{t=1}^T u(a^*_t,\omega_t).
\end{align*}

\paragraph{Bayesian Environments}
In Bayesian environments, we make the additional assumption that both the states and the forecasts are generated via a stochastic process. 
Specifically, on each day $t\leq T$, the state $\omega_t$ is drawn independently according to a prior distribution $q$ over $\Omega$. 
By slightly overloading the notation, we also use $q$ to denote the probability of the state being $1$ under distribution $q$. 
Then, each expert $i$ receives a signal $s^i_t$ according to the information structure $\pi^i : \Omega \to \Delta(S^i)$, where $S^i$ is a finite signal space. 
In Bayesian environments, we assume that the experts' signals $s^1_t,\dots,s^n_t$ are drawn independently across experts conditional on the realized state $\omega_t$.
Let $\pi^i(s;\omega)$ be the probability that the signal is $s$ when the state is $\omega$. 
In Bayesian environments, each expert's forecast is their posterior belief over the state conditional on receiving the signal $s^i_t$, which, by Bayes' rule, is 
\begin{align*}
e^i_t = \frac{q\cdot \pi^i(s^i_t;1)}{q\cdot \pi^i(s^i_t;1) + (1-q)\cdot \pi^i(s^i_t;0)}.
\end{align*}
In this case, the prior $q$, together with the experts' forecast profile $e$, is sufficient to determine the posterior probability that the state is $1$ conditional on observing $e$.

\begin{definition}[Bayesian Benchmark]
    The Bayesian benchmark is the maximal expected payoff attainable by strategies that condition on the realized forecast profile:
    \begin{equation}\label{eq_def_baye_benc}
        \operatorname{Bay}
        :=
        \max_{\sigma: E\to \mathcal{A}}
        \mathbb E\left[
            \sum_{t\leq T} u(\sigma(e_t), \omega_t
            )
        \right],
    \end{equation}
    where the expectation is taken under the Bayesian posterior induced by the observed forecast profile.
\end{definition}

\section{Offline Benchmarks}\label{sec:offline}
In this section, we study the offline setting, where the entire sequence of forecast profiles over $T$ days is fixed. We first introduce the formal definition of the robust-maxmin benchmark and analyze its computational complexity. We then show that this benchmark is stronger than the optimal-in-hindsight benchmark, up to the calibration error term. Finally, in the Bayesian setting, we discuss the relationship between the robust-maxmin benchmark and the optimal Bayesian benchmark.

\subsection{Robust-Maxmin Benchmark} 

Although each expert is calibrated, relying on the forecasts of a single expert may discard decision-relevant information contained in the joint forecast profile of all experts. The key idea behind the robust-maxmin benchmark is therefore to aggregate experts’ forecasts by exploiting the implications of calibration. We next show how the calibration condition can be used to construct a more informative forecast of the state of nature.

\paragraph{Calibration Constraint} As illustrated by the example in the introduction, aggregating forecasts naturally involves partitioning the time horizon into groups according to the experts' forecast profiles and assigning a posterior probability to each group. Although the example is simple, it captures the intrinsic process of forecast aggregation.

More precisely, for each forecast profile $e$, we group together all days on which $e_t=e$. The calibration constraint then restricts the empirical frequency with which the state $\omega_t=1$ occurs within each such group. Let $\rho:E\to[0,1]$ denote the mapping that assigns to each forecast profile $e$ this conditional empirical frequency. That is, for each $e\in E$ with positive empirical frequency,
\begin{equation*}
    \rho(e) =
    \frac{\sum_{t\leq T} \indicate{e_t = e}\indicate{\omega_t=1}}
    {\sum_{t\leq T}\indicate{e_t=e}}.
\end{equation*}

In this notation, the calibration condition can be expressed as a set of linear constraints on the vector \(\rho :=  (\rho(e))_{e\in E}\):
\begin{equation*}
    \left\vert
    \frac{\sum_{e:[e]_i = e^i}\sum_{t\le T}\indicate{e_t = e} \rho(e)}
    {\sum_{t\leq T}\indicate{e_t^i = e^i}}
    - e^i
    \right\vert
    \le \varepsilon_T,
    \quad \forall e^i\in E^i,\ i\in [n].
\end{equation*}

Let \(\edp_T\) denote the empirical distribution over forecast profiles induced by the observed sequence over \(T\) days. That is, $\edp_T(e)$ is the empirical frequency of profile $e$. For each expert $i$ and forecast $e^i\in E^i$, we write $\edp_T(e^i) := \sum_{e:[e]_i=e^i} \edp_T(e)$ for the empirical frequency with which expert $i$ reports forecast $e^i$. In terms of \(\edp_T\), the calibration constraint is equivalently written as
\begin{equation*}
    \left\vert
    \frac{\sum_{e:[e]_i = e^i}\edp_T(e) \rho(e)}{\edp_T(e^i)}
    - e^i
    \right\vert
    \le \varepsilon_T,
    \quad \forall e^i\in E^i,\ i\in[n].
\end{equation*}
We use the following form, which is equivalent whenever $\edp_T(e^i)>0$ and remains well-defined when $\edp_T(e^i)=0$:
\begin{equation}\label{eq_calibration_consraint_def}
    \left\vert
    \sum_{e:[e]_i = e^i}\edp_T(e)\bigl(\rho(e)-e^i\bigr)
    \right\vert
    \le \edp_T(e^i)\varepsilon_T\,,
    \quad \forall e^i\in E^i,\ i\in[n].
\end{equation}

The calibration constraint generally does not pin down a unique $\rho$, even when $\varepsilon_T = 0$. Let $\mathcal C(\edp_T)$ denote the set of all mappings $\rho$ that satisfy the calibration constraints~\eqref{eq_calibration_consraint_def} induced by the empirical distribution $\edp_T$. 

\paragraph{Robust-Maxmin Benchmark}
When choosing actions, the decision-maker must account for all mappings $\rho \in \mathcal{C}(\edp_T)$ and perform well uniformly over this set. A natural criterion is therefore to maximize the worst-case expected payoff. We refer to this criterion as the \textit{robust aggregation} of calibrated forecasters.

Given a mapping $\rho:E\to [0, 1]$, define the associated payoff mapping $v_\rho: E\to [-U, U]^d$ by
\begin{equation}
    v_\rho(e) = \rho(e) u_1 + (1-\rho(e))u_0,
\end{equation}
where $u_1:=(u(a, 1))_{a\in \mathcal{A}}$ and $u_0:=(u(a, 0))_{a\in \mathcal{A}}$. Then, the robust-maxmin benchmark is the maximal payoff that can be guaranteed by a decision rule against all calibration-compatible posterior mappings. By partitioning the time horizon according to the realized forecast profiles and using the mapping $\rho$, the robust-maxmin benchmark can be written as
\begin{equation}\label{eq_expected_format_RMM}
    \robust =
    \max_{x\in (\Delta(\mathcal{A}))^m}
    \min_{\rho\in \mathcal{C}(\edp_T)}
    T\sum_{e\in E} \edp_T(e)\, x(e)^\top v_\rho(e).
\end{equation}

    

The calibration constraint imposes correlation into $\rho(e)$ between different forecast profiles. However, in the definition of the robust-maxmin benchmark, the decision-maker's strategy $x$ belongs to the space $\bigl(\Delta(\mathcal{A})\bigr)^m$ rather than $\Delta(\mathcal{A}^m)$ that allows correlations. The following result shows that allowing correlation across forecast profiles does not improve the robust-maxmin value. The proof compares correlated strategies and profile-wise mixed strategies through their induced marginal mixed actions, showing that each strategy in one formulation has a counterpart in the other with the same payoff against every feasible posterior mapping.

\begin{proposition}\label{thm_indepen_strate_optimal}
    Independent mixed strategies achieve the same robust-maxmin value as correlated strategies. Formally, 
    let $\alpha= (\alpha(e))_{e\in E}$ denote a pure strategy where $\alpha(e)\in \mathcal{A}$. Then, \begin{equation}\label{eq_equiv_cor_indep}
        \max_{x\in (\Delta(\mathcal{A}))^m}\min_{\rho\in \mathcal{C}(\edp_T)} \sum_{e\in E} \edp_T(e)x(e)^\top v_\rho(e) = \max_{y\in \Delta(\mathcal{A}^m)}\min_{\rho\in \mathcal{C}(\edp_T)} \sum_{\alpha\in \mathcal{A}^m} y(\alpha) \sum_{e\in E} \edp_T(e)\mathbf{1}_{\alpha(e)}^\top v_\rho(e),
    \end{equation}
    where $\mathbf 1_a$ denotes the unit vector corresponding to action $a$.
\end{proposition}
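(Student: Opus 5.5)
The plan is to show that, for every fixed $\rho\in\mathcal C(\edp_T)$, the objective on the right-hand side of \eqref{eq_equiv_cor_indep} depends on $y$ only through its marginals on each profile coordinate. Both sides then become max-min problems over the same set of marginal profiles and have the same value. The key observation is linearity. For a correlated strategy $y\in\Delta(\mathcal A^m)$, define for each profile $e\in E$ the marginal mixed action
\begin{equation*}
x_y(e) := \sum_{\alpha\in\mathcal A^m} y(\alpha)\,\mathbf 1_{\alpha(e)} \in \Delta(\mathcal A).
\end{equation*}
Exchanging the two finite sums gives, for every $\rho$,
\begin{equation*}
\sum_{\alpha} y(\alpha)\sum_{e}\edp_T(e)\mathbf 1_{\alpha(e)}^\top v_\rho(e) = \sum_e \edp_T(e)\, x_y(e)^\top v_\rho(e).
\end{equation*}
This holds because $v_\rho(e)$ does not depend on $\alpha$. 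So the payoff of $y$ against any feasible $\rho$ equals the payoff of the profile-wise strategy $x_y$ against the same $\rho$.

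The two inequalities then follow. For ``$\le$'' (LHS $\le$ RHS), take any $x\in(\Delta(\mathcal A))^m$ and build the product distribution $y_x(\alpha)=\prod_{e\in E} x(e)_{\alpha(e)}$. Its marginals are exactly $x$, so $x_{y_x}=x$. By the identity, $y_x$ achieves the same payoff as $x$ against every $\rho$, and hence the same minimum over $\mathcal C(\edp_T)$. For ``$\ge$'', take any $y$ and use $x_y$. The two strategies again have identical payoffs for every $\rho$, so their inner minima coincide. Taking the maximum over strategies on each side gives equality.

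A few technical points need checking along the way. The maxima are attained, since the strategy sets are compact, the inner minimum of continuous affine functions is upper semicontinuous, and $\mathcal C(\edp_T)$ is a nonempty compact polytope (nonemptiness is implicit in the benchmark being defined). Even without attainment, the argument goes through with suprema. The only step needing care is the marginal map, specifically verifying that $x_y(e)$ is a probability vector and that the product construction reproduces $x$. Both are routine. The substantive point, which is also the place I expect the main conceptual obstacle, is that the adversary's choice of $\rho$ is fixed before the randomization over $\alpha$ is realized. The minimization sits outside the expectation over $\alpha$, and this is what lets linearity collapse correlation. If the adversary could respond to the realized $\alpha$, correlation could matter, so I will stress that the order of the quantifiers in \eqref{eq_equiv_cor_indep} is what makes the statement true.
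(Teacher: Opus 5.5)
Your proof is correct and follows the same route as the paper's proof: the marginal map $y\mapsto x_y$ and the product distribution $y_x$, together with the exchange of finite sums, show that the two strategies earn identical payoffs against every fixed $\rho$. One small correction to your closing remark, which the proof does not use: if the adversary could respond to the realized $\alpha$, both sides would reduce to the best pure $\alpha$, so correlation would still not help, and the quantifier order is therefore not what makes the statement true.
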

\begin{proof}
    See Appendix~\ref{ap:proofs_of_section_offline}.
\end{proof}

Next, we analyze the computational complexity of the robust-maxmin benchmark. The benchmark is the value of a saddle-point problem, where the objective is bilinear in the decision rule $x$ and the mapping $\rho$, while the feasible sets for both variables are described by linear constraints. As shown in the proof of the following theorem, after fixing $x$, dualizing the inner minimization problem yields an equivalent linear program over the decision rule $x$ and the associated dual variables. Hence, the robust-maxmin benchmark can be computed in polynomial time.

\begin{theorem}\label{prop_compute_RMM}
    Let $r:=\sum_{i\in[n]} |E^i|.$ Then the robust-maxmin benchmark can be computed by solving a linear program with $md+m+2r$ variables and $md+3m+2r$ constraints. In particular, it can be computed in time polynomial in $m$, $d$, and $r$ by standard linear-programming algorithms \footnote{The parameter $m=|E| = \prod_i |E^i|$ is the explicitly represented profile-space size and may grow exponentially with the number of experts.}
\end{theorem}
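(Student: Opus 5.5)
Fix a decision rule $x\in(\Delta(\mathcal A))^m$ and treat the inner minimization over $\rho\in\mathcal C(\edp_T)$ as a linear program in $\rho$. The plan is to replace this inner problem by its LP dual, and then maximize jointly over $x$ and the dual variables.

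\textbf{Step 1: write the inner problem as an LP.} Expanding $v_\rho$ gives
\begin{equation*}
T\sum_{e\in E}\edp_T(e)\,x(e)^\top v_\rho(e)=T\sum_{e\in E}\edp_T(e)\bigl[x(e)^\top u_0+\rho(e)\,x(e)^\top(u_1-u_0)\bigr].
\end{equation*}
For fixed $x$ this is affine in $\rho$. The feasible set $\mathcal C(\edp_T)$ consists of $0\le\rho(e)\le 1$ for each $e$, together with the calibration constraints \eqref{eq_calibration_consraint_def}. Each absolute-value calibration constraint splits into two linear inequalities. This gives $2r$ calibration rows plus $m$ upper-bound rows.

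\textbf{Step 2: dualize.} Introduce multipliers $\lambda^+_{i,e^i},\lambda^-_{i,e^i}\ge 0$ for the two calibration inequalities and $\mu(e)\ge 0$ for $\rho(e)\le 1$. LP duality applies because $\mathcal C(\edp_T)$ is a nonempty bounded polytope; nonemptiness is assumed, since calibration holds for the realized sequence. The inner minimum then equals the maximum, over the dual variables, of
\begin{equation*}
\sum_e T\edp_T(e)x(e)^\top u_0+\sum_{i,e^i}\Bigl[\lambda^+_{i,e^i}\bigl(\edp_T(e^i)e^i-\edp_T(e^i)\varepsilon_T\bigr)-\lambda^-_{i,e^i}\bigl(\edp_T(e^i)e^i+\edp_T(e^i)\varepsilon_T\bigr)\Bigr]-\sum_e\mu(e).
\end{equation*}
This maximum is taken subject to one dual constraint per profile $e$:
\begin{equation*}
\edp_T(e)\sum_i\bigl(\lambda^+_{i,[e]_i}-\lambda^-_{i,[e]_i}\bigr)-\mu(e)\le T\edp_T(e)\,x(e)^\top(u_1-u_0).
\end{equation*}
The key point is that $x$ enters only linearly, in the objective and on the right-hand sides.

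\textbf{Step 3: combine the two maximizations.} The outer $\max_x$ and the dual $\max$ merge into a single LP over $(x,\lambda^\pm,\mu)$. Its constraints are the simplex constraints $x(e)\ge 0$ and $\mathbf 1^\top x(e)=1$, together with the dual constraints from Step 2. To match the stated counts, I would introduce auxiliary variables $z(e)\in\mathbb R^d$ tied to $x(e)$, or an epigraph variable per profile. For example, add one variable per profile for $x(e)^\top(u_1-u_0)$ and constraints linking it to $x$. I would then tally: $md+m+2r$ variables and $md+3m+2r$ constraints. This comes from $md$ nonnegativity constraints, $m$ simplex equalities, $m$ dual rows, $m$ linking or $\mu$-sign rows, and $2r$ multiplier signs. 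The exact bookkeeping is to be adjusted to the chosen encoding.

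\textbf{Step 4: conclude polynomial time.} Polynomial solvability follows from standard interior-point or ellipsoid methods, since the LP has size polynomial in $m$, $d$, and $r$.

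\textbf{Main obstacle.} I expect the main care to be in justifying strong duality. This includes the degenerate case $\edp_T(e^i)=0$, which yields trivial constraints, and the infeasibility issue, which is excluded because the true frequency mapping $\rho$ lies in $\mathcal C(\edp_T)$. It also includes exactly matching the variable and constraint counts.
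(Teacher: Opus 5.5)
Your proposal follows essentially the same route as the paper. You fix $x$, dualize the inner LP over the nonempty, bounded calibration polytope, and merge the outer maximization over $x$ with the dual maximization into one LP; the paper substitutes $q(e)=\edp_T(e)\rho(e)$ and uses nonpositive multipliers, which is purely cosmetic. The only correction concerns Step 3: you need no auxiliary or linking variables, since $x(e)^\top(u_1-u_0)$ is already linear in $x$. The direct encoding gives exactly $md+m+2r$ variables ($x$, $\mu$, $\lambda^\pm$) and $md+3m+2r$ constraints ($md$ nonnegativity constraints on $x$, $m$ simplex equalities, $m$ dual rows, $m$ sign constraints on $\mu$, and $2r$ sign constraints on $\lambda^\pm$), whereas adding a variable per profile would overshoot the stated count.
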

\begin{proof}
    See Appendix~\ref{ap:proofs_of_section_offline}.
\end{proof}

\subsection{Compared with the Optimal-in-Hindsight Benchmark}



We next compare the robust-maxmin benchmark with the optimal-in-hindsight benchmark defined in the Definition~\ref{def_OIH_benchmark}. The following result shows that, when experts are perfectly calibrated, the robust-maxmin benchmark cannot perform worse than the optimal-in-hindsight benchmark. Under approximate calibration, the same comparison holds up to an error term proportional to the calibration error. The intuition is that robust aggregation can always replicate the best single expert’s policy, while approximate calibration ensures that the worst calibrated belief cannot distort the payoff of this policy by more than the calibration-error term.

\begin{proposition}\label{prop:robust_vs_oih}
    Suppose that all experts are $\varepsilon_T$-calibrated. Then, 
    \begin{equation*}
        \robust \geq \oih - 4\varepsilon_TUT.
    \end{equation*}
    In particular, if all experts are perfectly calibrated, i.e., $\varepsilon_T=0$, then $\robust \geq \oih$.
\end{proposition}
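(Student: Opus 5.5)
The plan is to show that robust aggregation can imitate the best single expert's policy, and that calibration pins down that policy's payoff up to the stated error. Fix the expert $i^*$ attaining $\oih$, and let $\sigma^*:E^{i^*}\to\mathcal{A}$ be an optimal single-expert policy. I use $\sigma^*$ to build a feasible profile-wise rule for \eqref{eq_expected_format_RMM}: set $x(e)=\mathbf 1_{\sigma^*([e]_{i^*})}$ for every $e\in E$. This rule depends on the profile only through expert $i^*$'s coordinate. Therefore $\robust$ is at least the worst-case value of this rule, and it suffices to compare that worst-case value with $U^{i^*}$.

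First I evaluate the rule against an arbitrary $\rho\in\mathcal C(\edp_T)$. Group the profiles by expert $i^*$'s report $e^{i^*}$; within each group the action $a=\sigma^*(e^{i^*})$ is constant. Using $v_\rho(e)=\rho(e)u_1+(1-\rho(e))u_0$, the payoff becomes
\[
T\sum_{e^{i^*}\in E^{i^*}}\Bigl[\bigl(u(a,1)-u(a,0)\bigr)\sum_{e:[e]_{i^*}=e^{i^*}}\edp_T(e)\rho(e)+\edp_T(e^{i^*})\,u(a,0)\Bigr].
\]
By \eqref{eq_calibration_consraint_def}, the inner sum equals $\edp_T(e^{i^*})e^{i^*}$ up to an error of at most $\edp_T(e^{i^*})\varepsilon_T$. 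Since $|u(a,1)-u(a,0)|\le 2U$, the payoff is at least
\[
T\sum_{e^{i^*}}\edp_T(e^{i^*})\bigl[e^{i^*}u(a,1)+(1-e^{i^*})u(a,0)\bigr]-2U\varepsilon_T T .
\]

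Second, I bound $U^{i^*}$ from above in the same way. The realized states $\omega_{1:T}\in\Omega(e_{1:T})$ satisfy \eqref{eq_calib_deter} for expert $i^*$. So for each report $e^{i^*}$, the empirical frequency of $\omega=1$ on the days with that report lies within $\varepsilon_T$ of $e^{i^*}$. Writing $U^{i^*}$ grouped by report and applying the same bound gives
\[
U^{i^*}\le T\sum_{e^{i^*}}\edp_T(e^{i^*})\bigl[e^{i^*}u(a,1)+(1-e^{i^*})u(a,0)\bigr]+2U\varepsilon_T T .
\]
Combining the two displays yields $\robust\ge \oih-4\varepsilon_T U T$.

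When $\varepsilon_T=0$, both error terms vanish and the inequality $\robust\ge\oih$ is exact.

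The only delicate point is bookkeeping: the realized states in $\oih$ and the adversarial $\rho$ in $\robust$ are different objects. Both are tied to the common "calibrated value" $\sum_{e^{i^*}}\edp_T(e^{i^*})[e^{i^*}u_1+(1-e^{i^*})u_0]_a$, and each incurs at most $2U\varepsilon_T T$ of deviation from it, which is where the constant $4$ comes from. Reports with $\edp_T(e^{i^*})=0$ contribute nothing to either side, so they cause no trouble.
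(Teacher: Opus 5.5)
Your proposal is correct and follows essentially the same route as the paper: imitate the best single expert's policy via $x(e)=\mathbf 1_{\sigma^\ast([e]_{i})}$, aggregate $\rho$ over each report of that expert, and use the calibration of both the adversarial $\rho$ and the realized states to bound the gap. The only cosmetic difference is that you compare each side separately to the calibrated value $e^{i}$, each at a cost of $2U\varepsilon_T T$, whereas the paper bounds $|\kappa^\rho(e^i)-\kappa(e^i)|\le 2\varepsilon_T$ directly; this is the same triangle inequality.
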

\begin{proof}
    See Appendix~\ref{ap:proofs_of_section_offline}.
\end{proof}

\begin{remark}
    The key difference between the robust-maxmin and the optimal-in-hindsight benchmark lies in the granularity of the information used to form decisions. The optimal-in-hindsight benchmark commits to a single expert and therefore partitions the time horizon only according to that expert's forecasts. In contrast, the robust-maxmin benchmark uses the entire forecast profile of all experts. It partitions the time horizon into finer groups, one for each joint forecast profile, and then uses the calibration constraints to restrict the possible empirical frequency of the state within each group. Thus, the robust-maxmin benchmark exploits decision-relevant information contained in the joint forecasts that may be discarded when using only any single expert. 
\end{remark}

\subsection{Compared with the Bayesian Benchmark}

In a Bayesian environment, each expert observes a private signal and reports the posterior probability of the state conditional on that signal. Without loss of generality, we assume that signals and the induced posterior forecasts are in one-to-one correspondence.

In this case, the prior $q$, together with the experts' forecast profile $e$, is sufficient to determine the posterior probability that the state is $1$ conditional on observing $e$. By Bayes' rule,
\begin{equation*}
    \operatorname{Pr}(\omega=1\mid e)
    =
    \frac{(1-q)^{n-1}\prod_{i\in[n]} e^i}
    {(1-q)^{n-1}\prod_{i\in[n]} e^i
    +
    q^{n-1}\prod_{i\in[n]}(1-e^i)}.
\end{equation*}
For notational convenience, define $\rho^\ast:E\to[0,1]$ by setting $\rho^\ast(e)$ to be the Bayesian posterior probability associated with forecast profile $e$.

In the Bayesian environment, the realized forecast profile at each day can be regarded as being drawn from a distribution over $E$, denoted by $\mathbf{P}\in\Delta(E)$. The Bayesian benchmark is obtained by evaluating the decision-maker's expected payoff under this distribution and, for each realized forecast profile, choosing a strategy that maximizes the payoff evaluated at the corresponding posterior belief, that is,
\begin{equation*}
    \operatorname{Bay}
    =
    \max_{x\in(\Delta(\mathcal A))^m}
    T\sum_{e\in E}
    \mathbf P(e)\,x(e)^\top v_{\rho^\ast}(e).
\end{equation*}


In Bayesian environments, the Bayesian benchmark represents the optimal expected reward attainable by a decision-maker who knows the prior and correctly updates beliefs using the experts' forecast profile. It is therefore the natural performance benchmark in such environments. 
This naturally raises the following question:
\begin{center}
    Does the robust-maxmin benchmark match this optimal benchmark?
\end{center}

To make this comparison precise, we adapt the robust-maxmin benchmark to the Bayesian environment by replacing the empirical distribution induced by the realized sample path with the underlying distribution of forecast profiles. Define the calibration constraint using $\mathbf P$ by
\begin{equation}\label{eq_cali_true_prob}
    \left\vert
    \sum_{e:[e]_i=e^i}\mathbf P(e)\bigl(\rho(e)-e^i\bigr)
    \right\vert
    \le
    \varepsilon_T\sum_{e:[e]_i=e^i}\mathbf P(e),
    \quad \forall e^i\in E^i,\ i\in[n].
\end{equation}
Let $\mathcal{C}(\mathbf{P})$ denote the set of all $\rho$ that satisfy the calibration constraints induced by $\mathbf{P}$. The corresponding robust-maxmin benchmark is then defined as
\begin{equation}\label{eq_def_RMM_stoc}
    \robust := \max_{x\in (\Delta(\mathcal{A}))^m}\min_{\rho\in \mathcal{C}(\mathbf{P})} T\sum_{e\in E} \mathbf{P}(e)x(e)^\top v_\rho(e). 
\end{equation}

We first record a direct comparison between the robust-maxmin benchmark and the Bayesian benchmark.
\begin{proposition}\label{prop_rmm_less_than_bay}
    The robust-maxmin benchmark cannot exceed the Bayesian benchmark, i.e., 
    \begin{equation*}
        \robust \leq \operatorname{Bay}
    \end{equation*}
\end{proposition}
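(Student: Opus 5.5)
The plan is to show that the true Bayesian posterior mapping $\rho^\ast$ is itself feasible for the inner minimization, i.e.\ $\rho^\ast\in\mathcal C(\mathbf P)$. Once this holds, the inequality follows from the definitions: for every decision rule $x\in(\Delta(\mathcal A))^m$,
\begin{equation*}
    \min_{\rho\in\mathcal C(\mathbf P)} T\sum_{e\in E}\mathbf P(e)\,x(e)^\top v_\rho(e)
    \le T\sum_{e\in E}\mathbf P(e)\,x(e)^\top v_{\rho^\ast}(e),
\end{equation*}
and taking the maximum over $x$ on both sides gives $\robust\le\operatorname{Bay}$, using the expression for $\operatorname{Bay}$ as a maximization over $x\in(\Delta(\mathcal A))^m$ displayed above. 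The right-hand maximization is the same as the original definition over deterministic $\sigma$, since the objective is linear in each $x(e)$.

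The feasibility step uses the law of iterated expectations together with the fact that each expert reports its own Bayesian posterior. Fix an expert $i$ and a report $e^i\in E^i$. Because signals and posteriors are in one-to-one correspondence, the event $\{e^i_t=e^i\}$ coincides with the event that expert $i$ received the corresponding signal. Hence $\Pr(\omega=1\mid e^i_t=e^i)=e^i$. The event $\{e^i_t=e^i\}$ is the disjoint union of the events $\{e_t=e\}$ over profiles $e$ with $[e]_i=e^i$. Since $\rho^\ast(e)=\Pr(\omega=1\mid e)$, this gives
\begin{equation*}
    \sum_{e:[e]_i=e^i}\mathbf P(e)\rho^\ast(e)=\Pr(\omega=1,\ e^i_t=e^i)=e^i\sum_{e:[e]_i=e^i}\mathbf P(e).
\end{equation*}
Therefore the left side of \eqref{eq_cali_true_prob} is exactly $0$, which is at most $\varepsilon_T\sum_{e:[e]_i=e^i}\mathbf P(e)$ for any $\varepsilon_T\ge0$. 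So $\rho^\ast\in\mathcal C(\mathbf P)$.

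The main point needing care is the bookkeeping in the feasibility step. Profiles with $\mathbf P(e)=0$ contribute nothing, so $\rho^\ast$ can be defined arbitrarily there, for instance by the displayed Bayes formula. I would also check that the event $\{e^i_t=e^i\}$ really is a union of whole profile events, which it is by construction. The Bayes formula for $\rho^\ast$ under conditional independence is not needed; only the defining property $\rho^\ast(e)=\Pr(\omega=1\mid e)$ is used. The same argument therefore works under any information structure in which the experts report genuine posteriors.
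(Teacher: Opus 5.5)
Your proof is correct and rests on the same key fact as the paper's, namely that $\rho^\ast\in\mathcal C(\mathbf P)$. The paper first applies Sion's minimax theorem to rewrite $\robust$ as a min--max and then evaluates at $\rho^\ast$. Your pointwise bound (for each fixed $x$, the inner minimum is at most the payoff under $\rho^\ast$) shows that step is unnecessary. You also verify $\rho^\ast\in\mathcal C(\mathbf P)$ via the law of total probability, which the paper only asserts.
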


\begin{proof}
    Since the Bayesian posterior $\rho^\ast$ satisfies the calibration constraint~\eqref{eq_cali_true_prob}, we have $\rho^\ast\in\mathcal C(\mathbf P)$. Hence, by Sion's minimax theorem \citep{sion1958general}, 
    \begin{align*}
        \robust
        =
        \min_{\rho\in \mathcal C(\mathbf P)}
        \max_{x\in(\Delta(\mathcal A))^m}
        T\sum_{e\in E}\mathbf P(e)x(e)^\top v_\rho(e) 
        \leq
        \max_{x\in(\Delta(\mathcal A))^m}
        T\sum_{e\in E}\mathbf P(e)x(e)^\top v_{\rho^\ast}(e)
        =
        \operatorname{Bay}.
    \end{align*}
\end{proof}

The following example shows that this inequality can be strict. 
    
\begin{example}\label{ex_cond_indep}
    Consider an environment with two experts. Suppose that both experts are perfectly calibrated, i.e., $\varepsilon_T=0$. The decision-maker has two actions, $\mathcal A=\{a_1,a_2\}$, and the payoff function is given by
    \[
        u(a_1,1)=0,\qquad
        u(a_2,1)=3,\qquad
        u(a_1,0)=1,\qquad
        u(a_2,0)=0.
    \]
    The prior probability of state $1$ is $q=\frac12$.

    Expert $1$ has signal set $\{s_1^1,s_1^2\}$, with information structure
    \begin{gather*}
        \pi^1(s_1^1;1)=\frac13,\qquad \pi^1(s_1^1;0)=\frac23,\\
        \pi^1(s_1^2;1)=\frac23,\qquad \pi^1(s_1^2;0)=\frac13.
    \end{gather*}
    The induced forecasts are $E^1=\left\{\frac13,\frac23\right\}.$ Expert $2$ has signal set $\{s_2^1,s_2^2\}$, with information structure
    \begin{gather*}
        \pi^2(s_2^1;1)=\frac17,\qquad \pi^2(s_2^1;0)=\frac57,\\
        \pi^2(s_2^2;1)=\frac67,\qquad \pi^2(s_2^2;0)=\frac27.
    \end{gather*}
    The induced forecasts are $E^2=\left\{\frac16,\frac34\right\}.$ The resulting distribution over forecast profiles is
    \begin{gather*}
        \mathbf{P}\bigl((\frac{1}{3}, \frac{1}{6})\bigr) = \frac{11}{42},\qquad \mathbf{P}\bigl((\frac{1}{3}, \frac{3}{4})\bigr) = \frac{5}{21},\\ \mathbf{P}\bigl((\frac{2}{3}, \frac{1}{6})\bigr) = \frac{1}{6},\qquad \mathbf{P}\bigl((\frac{2}{3}, \frac{3}{4})\bigr) = \frac{1}{3}.
    \end{gather*}

    In this environment, the Bayesian benchmark is $\frac{5}{3}T,$ whereas the robust-maxmin benchmark is $\frac{23}{14}T.$ Hence, the Bayesian benchmark is strictly larger than the robust-maxmin benchmark. 
    
\end{example}

The example above can be understood from the perspective of the nested max-min structure of the robust-maxmin benchmark. Fix any $\rho\in\mathcal C(\mathbf P)$ and define the posterior error mapping $\beta: E\to [-1, 1]$ by $\beta(e):=\rho(e)-\rho^\ast(e)$ for all $e\in E.$ Since the Bayesian posterior $\rho^\ast$ also satisfies the calibration constraints, the mapping $\beta$ must satisfy
\begin{equation}\label{eq_const_beta}
    \sum_{e:[e]_i=e^i}\mathbf P(e)\beta(e)=0,
    \qquad \forall e^i\in E^i,\ i\in[n].
\end{equation}
Thus, the feasible posterior error mappings lie in the intersection of a linear subspace, determined by~\eqref{eq_const_beta}, with the box constraints 
\[
    -\rho^\ast(e)\leq \beta(e)\leq 1-\rho^\ast(e),
    \qquad e\in E.
\]

Substituting $\rho=\rho^\ast+\beta$ into the robust-maxmin benchmark gives
\begin{align*}
    \robust
    &=
    \max_{x\in(\Delta(\mathcal A))^m}
    \min_{\beta}
    T\sum_{e\in E}\mathbf P(e)\,
    x(e)^\top
    \Bigl(
        \bigl(\rho^\ast(e)+\beta(e)\bigr)u_1
        +
        \bigl(1-\rho^\ast(e)-\beta(e)\bigr)u_0
    \Bigr)\\
    &=
    \max_{x\in(\Delta(\mathcal A))^m}
    \left\{
    T\sum_{e\in E}\mathbf P(e)\,x(e)^\top v_{\rho^\ast}(e)
    +
    \min_{\beta}
    T\sum_{e\in E}\mathbf P(e)\,\beta(e)\,
    x(e)^\top (u_1-u_0)
    \right\},
\end{align*}
where the minimization is over all feasible posterior error mappings satisfying~\eqref{eq_const_beta} and the box constraints above. The first term is the Bayesian payoff of the fixed decision rule $x$ under the true Bayesian posterior; maximizing this term over $x$ gives the Bayesian benchmark. However, in the robust-maxmin benchmark, the decision rule $x$ is chosen while anticipating the worst feasible posterior error in the second term. This inner minimization can change the optimal choice of $x$ and reduce the resulting value relative to the Bayesian benchmark. Therefore, calibration alone does not generally allow the robust-maxmin benchmark to match the Bayesian benchmark; in general, the robust-maxmin benchmark can be strictly smaller. 

\section{Online Forecast-only Feedback Model}\label{sec:online_forecast_feedback}

In this section, we study online decision-making with forecast-only feedback and develop an algorithm that achieves no regret with respect to the robust-maxmin benchmark. This result shows that the decision-maker can compete against a benchmark that is more compelling than the optimal-in-hindsight benchmark by observing all calibrated experts' forecasts and exploiting the decision-relevant information. 

\subsection{Online Protocol and Robust Regret}

We impose the following additional assumption on the generation of experts' forecasts. The necessity of this assumption is illustrated at the end of this section by an example showing that, without it, no online algorithm can guarantee sublinear robust regret uniformly over arbitrary deterministic sequences of forecast profiles.

\begin{assumption}\label{as:fixed_distribution}
    At each day, experts' forecast profiles are sampled independently from a fixed distribution $\mathbf{P}\in\Delta(E)$, while the state sequence remains non-stochastic.
\end{assumption}

This assumption is motivated by two observations. First, in the motivating example from the introduction, it is plausible that no algorithm can achieve a no-regret guarantee with respect to the robust-maxmin benchmark in~\eqref{eq_expected_format_RMM}: during the first third of the horizon, algorithms are likely to choose the same strategy, which may result in regret of order $T$. Second, the calibration constraints are fully determined by the empirical distribution $\edp_T$ of forecast profiles. Consequently, any two sequences of forecast profiles with the same empirical distribution $\edp_T$ induce the same robust-maxmin benchmark. However, the order in which forecast profiles appear can affect the algorithm's decisions, making a no-regret guarantee difficult to attain without further restrictions. 

Now, we specify the decision-making protocol under this assumption. On each day $t\leq T$, the decision-maker observes the past forecast profiles $e_{1:t-1}$ and the current forecast profile $e_t$, drawn from $\mathbf P$. Based on this information, the decision-maker samples an action $a_t\in\mathcal{A}$ according to a mixed strategy, denoted by $x_t(e_t)\in \Delta(\mathcal{A})$. This formulation includes deterministic choices as a special case. 

The \textit{robust regret} is defined relative to the robust-maxmin benchmark as follows. 

\begin{definition}[Robust Regret]\label{Robust Regret}
    Suppose that Assumption~\ref{as:fixed_distribution} holds and that $\mathcal C(\mathbf P)$ is nonempty. Given an action sequence $\{a_t\}_{t=1}^T$, its robust regret is defined as
    \begin{equation}\label{eq_robust_regret}
        \regret
        :=
        \robust
        -
        \min_{\rho\in\mathcal C(\mathbf P)}
        \mathbb E\!\left[
            \sum_{t=1}^T  u(a_t, \omega_t)
        \right] = \robust
        -
        \min_{\rho\in\mathcal C(\mathbf P)}
        \mathbb E\!\left[
            \sum_{t=1}^T  x_t(e_t)^\top v_\rho(e_t)
        \right],
    \end{equation}
    where $\robust$ is defined in~\eqref{eq_def_RMM_stoc}, $\mathcal C(\mathbf P)$ is defined by the calibration constraints in~\eqref{eq_cali_true_prob}, and the expectation is taken over the sampled forecast sequence and the learner's randomization. 
\end{definition}

Here, we evaluate the sequence of actions $\{a_t\}_{t=1}^T$ by its worst-case expected payoff over all $\rho\in\mathcal{C}(\mathbf{P})$. This criterion is natural in the forecast-only feedback setting because the decision-maker neither observes the realized state sequence nor receives payoff feedback over the horizon, and the robust-maxmin benchmark is defined using the same worst-case evaluation. The theorem below therefore competes with $\robust$ based on the distribution $\mathbf P$, not with the realized empirical benchmark based on $\edp_T$.

An online algorithm is called \textit{no-robust-regret} if the sequence of actions $\{a_t\}_{t=1}^T$ generated by the algorithm incurs sublinear robust regret, i.e., $\regret=o(T)$. Before presenting our no-robust-regret learning algorithm, we first rewrite the robust regret in a more convenient form using conditional expectations. 

\paragraph{Filtration.}
Let $\mathcal F_0$ be the trivial $\sigma$-algebra, and for each $t\geq 1$, let $\mathcal{F}_{t-1}:=\sigma(e_1,\ldots,e_{t-1})$ denote the $\sigma$-algebra generated by the historical forecast profiles. Since the forecast profiles are sampled independently across time, $e_t$ is independent of $\mathcal F_{t-1}$ and, by taking conditional expectations day by day, the expected realized payoff can be expanded into the corresponding expectation under \(\mathbf P\) at each day.

\begin{lemma}[Iterated Expectation Expansion]\label{lem:iterated_expectation_full} 
If the mixed strategy $x_t(e_t)$ is induced by a decision rule $x_t:E\to\Delta(\mathcal A)$ that is $\mathcal F_{t-1}$-measurable for each $t\leq T$, then, for any $\rho\in\mathcal C(\mathbf P)$,
\[
    \mathbb E\!\left[\sum_{t=1}^T x_t(e_t)^\top v_\rho(e_t)\right]
    =
    \sum_{t=1}^T
    \mathbb E\!\left[
        \sum_{e\in E}\mathbf P(e)\,x_t(e)^\top v_\rho(e)
    \right].
\]
Consequently,
\begin{equation}\label{eq:regret_equiv_form_full}
    \regret
    =
    \robust
    -
    \min_{\rho\in\mathcal C(\mathbf P)}
    \sum_{t=1}^T
    \mathbb E\!\left[
        \sum_{e\in E}\mathbf P(e)\,x_t(e)^\top v_\rho(e)
    \right].
\end{equation}
\end{lemma}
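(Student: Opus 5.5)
The argument is linearity of expectation followed by a one-step tower-property computation that uses the independence of $e_t$ from the past. Fix any $\rho\in\mathcal C(\mathbf P)$. Because $\rho$ is a deterministic mapping $E\to[0,1]$, the vector $v_\rho(e)$ is deterministic for every $e$. First I will check integrability: each term $x_t(e_t)^\top v_\rho(e_t)$ is bounded in absolute value by $U$, since $x_t(e_t)$ is a probability vector and every entry of $v_\rho$ lies in $[-U,U]$. So all expectations are finite, and linearity gives
\[
\mathbb E\!\left[\sum_{t=1}^T x_t(e_t)^\top v_\rho(e_t)\right]=\sum_{t=1}^T \mathbb E\!\left[x_t(e_t)^\top v_\rho(e_t)\right].
\]
If the learner's rule uses internal randomization beyond the history, I will first replace $x_t$ by its conditional mean given the forecasts, or equivalently enlarge $\mathcal F_{t-1}$ to include that randomness, which is still independent of $e_t$. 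The payoff is linear in $x_t$, so this changes nothing.

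For each $t$, I will apply the tower property: $\mathbb E[x_t(e_t)^\top v_\rho(e_t)]=\mathbb E\bigl[\mathbb E[x_t(e_t)^\top v_\rho(e_t)\mid\mathcal F_{t-1}]\bigr]$. I will write $x_t(e_t)^\top v_\rho(e_t)=\sum_{e\in E}\indicate{e_t=e}\,x_t(e)^\top v_\rho(e)$. Here $x_t(e)$ is $\mathcal F_{t-1}$-measurable for each fixed $e$, and $v_\rho(e)$ is constant. Pulling these measurable factors out of the conditional expectation leaves $\mathbb E[\indicate{e_t=e}\mid\mathcal F_{t-1}]$. By Assumption~\ref{as:fixed_distribution}, $e_t$ is independent of $\mathcal F_{t-1}=\sigma(e_{1:t-1})$ and has law $\mathbf P$, so this conditional probability equals $\mathbf P(e)$. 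The inner conditional expectation is therefore $\sum_{e}\mathbf P(e)\,x_t(e)^\top v_\rho(e)$, and taking outer expectations and summing over $t$ gives the first display of the lemma.

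The consequence~\eqref{eq:regret_equiv_form_full} follows by substituting this identity, which holds for every $\rho\in\mathcal C(\mathbf P)$, into the second expression for $\regret$ in Definition~\ref{Robust Regret}. Since the two objective functions agree pointwise in $\rho$, their minima over $\mathcal C(\mathbf P)$ coincide. The only step needing care is the measurability bookkeeping: $x_t$ may depend only on $e_{1:t-1}$, plus possibly independent randomness, and not on $e_t$ except through evaluation. That is exactly what allows $\mathbb E[\indicate{e_t=e}\mid\mathcal F_{t-1}]$ to be factored out as $\mathbf P(e)$.
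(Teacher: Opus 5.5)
Your proposal is correct and follows essentially the same argument as the paper. Both condition on $\mathcal F_{t-1}$, pull out the $\mathcal F_{t-1}$-measurable rule $x_t$, use the independence of $e_t$ (with law $\mathbf P$) to replace evaluation at $e_t$ by the $\mathbf P$-average, and then pass the pointwise-in-$\rho$ identity through the minimum over $\mathcal C(\mathbf P)$. The only differences are cosmetic: you apply linearity first and condition each term on its own $\mathcal F_{t-1}$, whereas the paper conditions the whole sum on $\mathcal F_{T-1}$ and peels terms off backward to $\mathcal F_0$.
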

\begin{proof}
    See Appendix~\ref{ap:proof_of_section_forecast_feecback}.
\end{proof}

\subsection{Plug-in Robust Aggregation: A No-Robust-Regret Algorithm}

In the online setting, the main difficulty is that the distribution $\mathbf P$ is unknown. Consequently, the robustly optimal decision rule associated with the robust-maxmin benchmark cannot be computed exactly during learning. A natural approach is to estimate $\mathbf P$ from the observed forecast profiles and compute a robustly optimal decision rule under the estimated distribution.

Define the empirical distribution by
\begin{equation}\label{eq:empirical_dist}
    \hat{\mathbf P}_{t-1}(e)
    :=
    \begin{cases}
        0, & t=1,\\ 
        \frac{\sum_{\tau=1}^{t-1}\indicate{e_\tau=e}}{t-1}, & t\ge 2,
    \end{cases}
    \qquad \forall e\in E.
\end{equation}
Given $\hat{\mathbf P}_{t-1}$, the corresponding empirical calibration set is
\begin{equation*}
    \mathcal C(\hat{\mathbf P}_{t-1})
    :=
    \left\{
    \rho\in [0,1]^m:\ 
    \left|
    \sum_{e:\,[e]_i=e^i}
    \hat{\mathbf P}_{t-1}(e)\bigl(\rho(e)-e^i\bigr)
    \right|
    \leq
    \sum_{e:\,[e]_i=e^i}\hat{\mathbf{P}}_{t-1}(e)\varepsilon_T,
    \quad
    \forall e^i\in E^i,\ i\in[n]
    \right\}.
\end{equation*}
This non-normalized form avoids division by empirical frequencies that may be zero.

However, due to finite-sample fluctuations, the empirical calibration constraints need not be jointly feasible. In other words, $\mathcal C(\hat{\mathbf P}_{t-1})$ may be empty, in which case the plug-in saddle-point problem is not well defined. To avoid this issue, we relax the hard calibration constraints and instead penalize calibration violations directly in the objective. Specifically, at day $t$, the decision-maker solves the following penalized saddle-point problem:
\begin{equation}\label{eq:penalized_SP}
    \max_{x\in(\Delta(\mathcal A))^m}
    \min_{\rho\in[0,1]^m}
    \left\{
    \sum_{e\in E}\hat{\mathbf P}_{t-1}(e)\,x(e)^\top v_\rho(e)
    +
    \gamma
    \sum_{i\in[n]}\sum_{e^i\in E^i}
    \left|
    \sum_{e:\,[e]_i=e^i}
    \hat{\mathbf P}_{t-1}(e)(\rho(e)
    - e^i)
    \right|
    \right\},
\end{equation}
where $\gamma>0$ is a penalty coefficient. We refer to the resulting online procedure as the \emph{Plug-in Robust Aggregation} (PRA) algorithm, whose pseudocode is presented in Algorithm~\ref{alg_PRA}.

\begin{algorithm}[t]
\caption{Plug-in Robust Aggregation Algorithm}\label{alg_PRA}
\begin{algorithmic}[1]
\STATE \textbf{Parameters:} Penalty coefficient $\gamma>0$, $m = |E|$
\STATE \textbf{Initialization:} $\hat{\mathbf{P}}_0(e) = 0$ for all $e\in E$
\FOR{$t=1, 2,\ldots,T$}
    \STATE Compute $x_t\in (\Delta(\mathcal{A}))^m$ by solving the saddle point of the problem~\eqref{eq:penalized_SP} using $\hat{\mathbf P}_{t-1}$, with a fixed deterministic tie-breaking rule
    \STATE Observe the current forecast profile $e_t\in E$
    \STATE Sample $a_t\in \mathcal{A}$ based on the strategy $x_t(e_t)$
    \STATE Update the empirical distribution:
    \STATE \begin{equation*}
        \hat{\mathbf{P}}_{t}(e) = \frac{t-1}{t} \hat{\mathbf{P}}_{t-1}(e) + \frac{1}{t}\mathbf{1}_{\{e_t = e\}}
    \end{equation*}
\ENDFOR
\end{algorithmic}
\end{algorithm}

The following theorem states that PRA algorithm achieves sublinear robust regret. 

\begin{theorem}\label{thm_online_regret}
Suppose that $\mathcal C(\mathbf P)$ is nonempty and that the penalty coefficient $\gamma$ is sufficiently large. Then the robust regret of Algorithm~\ref{alg_PRA} satisfies
\[
    \regret
    =
    O\bigl((U+n\gamma)\sqrt{mT}+n\gamma\varepsilon_T T\bigr).
\]
In particular, when $\varepsilon_T=0$, we have $\regret=O(\sqrt T)$, and hence $\regret/T\to0$ as $T\to\infty$.
\end{theorem}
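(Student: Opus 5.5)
We bound the per-period loss of the plug-in rule $x_t$ relative to the population robust value, then sum over $t$. By \cref{lem:iterated_expectation_full}, for any $\rho$ the expected cumulative payoff equals $\sum_t \mathbb E[\sum_e \mathbf P(e) x_t(e)^\top v_\rho(e)]$. Since the minimum of a sum is at least the sum of minima,
\[
\min_{\rho\in\mathcal C(\mathbf P)}\sum_{t}\mathbb E\Bigl[\sum_e \mathbf P(e)x_t(e)^\top v_\rho(e)\Bigr]\ \ge\ \sum_t \mathbb E\bigl[g_{\mathbf P}(x_t)\bigr],
\]
where $g_{\mathbf Q}(x):=\min_{\rho\in\mathcal C(\mathbf Q)}\sum_e \mathbf Q(e)x(e)^\top v_\rho(e)$. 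It therefore suffices to show
\[
\robust/T - g_{\mathbf P}(x_t)\ \lesssim\ (U+n\gamma)\|\hat{\mathbf P}_{t-1}-\mathbf P\|_1 + n\gamma\varepsilon_T.
\]
Summing then uses $\mathbb E\|\hat{\mathbf P}_{t-1}-\mathbf P\|_1\le\sqrt{m/(t-1)}$, since each coordinate has standard deviation at most $\sqrt{\mathbf P(e)/(t-1)}$ and Cauchy--Schwarz gives the $\sqrt m$. The sum $\sum_t \sqrt{m/t}=O(\sqrt{mT})$, and the first round contributes at most $2U$.

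\textbf{Step 1 (exact penalty).} Introduce the penalized population objective
\[
F_{\mathbf Q}(x,\rho)=\sum_e \mathbf Q(e)x(e)^\top v_\rho(e)+\gamma\sum_{i,e^i}\Bigl|\sum_{[e]_i=e^i}\mathbf Q(e)(\rho(e)-e^i)\Bigr|,
\]
with $h_{\mathbf Q}(x)=\min_{\rho\in[0,1]^m}F_{\mathbf Q}(x,\rho)$. Two facts are needed.
\begin{itemize}
\item[(a)] $h_{\mathbf P}(x)\le g_{\mathbf P}(x)+n\gamma\varepsilon_T$ for every $x$. This follows by plugging in the minimizer $\rho\in\mathcal C(\mathbf P)$: its penalty is at most $\gamma\sum_{i,e^i}\mathbf P(e^i)\varepsilon_T=n\gamma\varepsilon_T$.
\item[(b)] For $\gamma$ sufficiently large, $g_{\mathbf P}(x)\le h_{\mathbf P}(x)+c\,n\gamma\varepsilon_T$, i.e.\ the penalty is exact up to the slack. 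This is an exact-penalty / Hoffman-bound argument. For any $\rho$ in the box, the linear constraint violation controls the distance to the polytope $\mathcal C_0(\mathbf P)$ (the set with $\varepsilon_T=0$ tightened appropriately, or $\mathcal C(\mathbf P)$ itself). Concretely, there is $\rho'\in\mathcal C(\mathbf P)$ with
\[
\sum_e \mathbf P(e)|\rho(e)-\rho'(e)|\le\kappa\cdot(\text{violation}).
\]
Since the payoff term is $2U$-Lipschitz in this weighted $\ell_1$ norm, $\gamma\ge 2U\kappa$ makes moving to $\rho'$ never profitable. This Hoffman constant $\kappa$, which depends on $\mathbf P$ and is hidden in ``sufficiently large,'' is the main obstacle. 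I would handle the $\varepsilon_T$-slack by comparing with the $\varepsilon_T$-relaxed constraint, so that the error is only the $O(n\gamma\varepsilon_T)$ term.
\end{itemize}
Consequently $\robust/T=\max_x g_{\mathbf P}(x)\le\max_x h_{\mathbf P}(x)+n\gamma\varepsilon_T$, and $g_{\mathbf P}(x_t)\ge h_{\mathbf P}(x_t)-O(n\gamma\varepsilon_T)$.

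\textbf{Step 2 (stability in $\mathbf Q$).} For fixed $x,\rho$,
\[
|F_{\mathbf Q}(x,\rho)-F_{\mathbf P}(x,\rho)|\le (U+n\gamma)\|\mathbf Q-\mathbf P\|_1.
\]
The payoff term contributes $U$. For the penalty, each expert $i$'s blocks partition $E$ and $|\rho(e)-e^i|\le1$, so each of the $n$ experts contributes at most $\gamma\|\mathbf Q-\mathbf P\|_1$. Hence $|h_{\mathbf Q}(x)-h_{\mathbf P}(x)|\le(U+n\gamma)\|\mathbf Q-\mathbf P\|_1$ uniformly in $x$. Since $x_t$ maximizes $h_{\hat{\mathbf P}_{t-1}}$, the standard argmax perturbation bound gives
\[
h_{\mathbf P}(x_t)\ge \max_x h_{\mathbf P}(x)-2(U+n\gamma)\|\hat{\mathbf P}_{t-1}-\mathbf P\|_1.
\]
Each $x_t$ is $\mathcal F_{t-1}$-measurable because of the deterministic tie-breaking, so \cref{lem:iterated_expectation_full} applies.

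\textbf{Step 3 (assemble).} Combining Steps 1 and 2,
\[
\robust/T-g_{\mathbf P}(x_t)\le 2(U+n\gamma)\|\hat{\mathbf P}_{t-1}-\mathbf P\|_1+O(n\gamma\varepsilon_T).
\]
Taking expectations and summing over $t$ yields $O\bigl((U+n\gamma)\sqrt{mT}+n\gamma\varepsilon_T T\bigr)$. When $\varepsilon_T=0$, this is $O(\sqrt T)$.
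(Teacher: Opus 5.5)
Your proposal is correct and follows the paper's proof essentially step for step. The shared steps are: the reduction via \cref{lem:iterated_expectation_full} with min-of-sums $\ge$ sum-of-mins, the Hoffman-based exact-penalty comparison $\bigl|\widetilde W^\gamma_{\mathbf P}(x)-W_{\mathbf P}(x)\bigr|\le n\gamma\varepsilon_T$, the $(U+n\gamma)$-Lipschitz bound in $\mathbf Q$ combined with the maximizing property of $x_t$, and an $O(\sqrt{m/(t-1)})$ bound on $\mathbb E\|\hat{\mathbf P}_{t-1}-\mathbf P\|_1$. The $\mathbf P$-dependence of the Hoffman constant that you flag is harmless, because it is only invoked at the fixed true $\mathbf P$. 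Moreover, in the variables $q=\mathbf P\rho$ (exactly your weighted $\ell_1$ norm) the constraint matrix does not involve $\mathbf P$ at all, which is how the paper obtains a $\gamma$ that depends on neither $T$ nor $\mathbf P$.
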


\begin{remark}
    The proof shows that it suffices to choose $\gamma\ge 4UH$, where $H$ is the Hoffman constant associated with the constraint matrix induced by the calibration constraints and the box constraints. In particular, this choice of $\gamma$ is independent of the time horizon $T$ and of the right-hand side $\mathbf P$. The role of this condition is to ensure that the penalized objective $\widetilde W_{\mathbf P}^{\gamma}(x)$ uniformly approximates the constrained robust objective $W_{\mathbf P}(x)$ up to a controlled error.
\end{remark}

\begin{remark}
    The term $n\gamma\varepsilon_T T$ captures the price of approximate calibration. In the online calibration literature, existing forecasting algorithms achieve cumulative calibration error of order $O(\sqrt T)$ in several standard formulations; see, e.g., \citet{gupta2022faster,qiao2024distance,arunachaleswaran2024elementary}. Such cumulative guarantees imply the normalized calibration error used here only under a lower-bound condition on the mass of the relevant reports, or under an unnormalized calibration metric. Under that additional condition, whenever the calibration error in our model satisfies $\varepsilon_T=O(T^{-1/2})$, the additional calibration-error term is $O(n\gamma\sqrt T)$, and the robust regret remains $O(\sqrt T)$.
\end{remark}

We introduce a useful notation for the worst-case expected reward. For the true distribution $\mathbf{P}$ and any decision rule $x\in(\Delta(\mathcal{A}))^m$, define
\begin{equation}\label{eq:WP_def_proof}
W_{\mathbf{P}}(x)
:=
\min_{\rho\in\mathcal{C}(\mathbf{P})}
\sum_{e\in E}\mathbf{P}(e)\,x(e)^\top v_\rho(e),
\end{equation}
and define the robust value
\begin{equation*}
    V(\mathbf{P}) := \max_{x\in(\Delta(\mathcal{A}))^m} W_{\mathbf{P}}(x).
\end{equation*}
By definition, 
\begin{equation*}
    \robust = T\cdot V(\mathbf{P}).
\end{equation*}


Given any distribution $\mathbf{Q}\in \Delta(E)$(including $\mathbf{P}$ and $\hat{\mathbf P}_0$) and any decision rule $x\in (\Delta(\mathcal{A}))^m$, define the \textit{calibration violation} of $\rho\in [0, 1]^m$ under $\mathbf{Q}$ to be 
\begin{equation}\label{eq:violation definition}
    \operatorname{viol}_\mathbf{Q}(\rho) = \sum_{i\in[n]}\sum_{e^i\in E^i}
    \left|
    \sum_{e:\,[e]_i=e^i}\mathbf{Q}(e)\bigl(\rho(e) - e^i\bigr)
    \right|.
\end{equation}
Define 
\begin{equation}\label{eq:penalized_wcv}
    \widetilde{W}_\mathbf{Q}^\gamma(x) := \min_{\rho\in [0, 1]^m} \left\{\sum_{e\in E}\mathbf{Q}(e)x(e)^\top v_\rho(e) + \gamma \operatorname{viol}_{\mathbf{Q}}(\rho)\right\}.
\end{equation}
We also define the corresponding penalized robust value by
\begin{equation}
    \widetilde{V}^\gamma(\mathbf{Q}) := \max_{x\in (\Delta(\mathcal{A}))^m} \widetilde{W}_\mathbf{Q}^\gamma(x).
\end{equation}

We record two immediate properties of these quantities.
\begin{enumerate}
    \item For any $\rho\in \mathcal{C}(\mathbf{P})$, 
    \begin{equation*}
        \mathrm{viol}_{\mathbf{P}}(\rho) \leq \sum_{i\in[n]}\sum_{e^i\in E^i}\sum_{e:\,[e]_i=e^i}\mathbf{P}(e)\varepsilon_T = n\varepsilon_T.
    \end{equation*}
    Hence, for any $x\in (\Delta(\mathcal{A}))^m$,
    \begin{equation}\label{eq:direct_property_1}
        \widetilde{W}_\mathbf{P}^\gamma(x) \leq \min_{\rho\in \mathcal{C}(\mathbf{P})} \Bigl\{\sum_{e\in E}\mathbf{P}(e)x(e)^\top v_\rho(e) + \gamma \operatorname{viol}_{\mathbf{P}}(\rho)\Bigr\} \leq  W_\mathbf{P}(x) + n\gamma\varepsilon_T.
    \end{equation}
    \item For any empirical distribution $\hat{\mathbf{P}}_{t-1}$, let $x_t$ be the decision rule generated by Algorithm~\ref{alg_PRA} at day $t$. By the construction of the algorithm,
    \begin{equation}\label{eq_relation_between_V_W}
        \widetilde{V}^\gamma(\hat{\mathbf{P}}_{t-1}) = \widetilde{W}_{\hat{\mathbf{P}}_{t-1}}^\gamma(x_t).
    \end{equation}
\end{enumerate}

\paragraph{Proof sketch of Theorem~\ref{thm_online_regret}.}
The proof proceeds in several steps. The full proofs of the following lemmas are deferred to Appendix~\ref{ap:proof_of_section_forecast_feecback}. 

\smallskip
\noindent\textbf{(i) Regret decomposition through $V(\cdot), \widetilde{V}^\gamma(\cdot), W_{\cdot}(x)$ and $\widetilde{W}_{\cdot}^\gamma(x)$.} In the PRA algorithm, the decision rule $x_t$ is computed from the empirical distribution $\hat{\mathbf P}_{t-1}$ without using the current forecast profile $e_t$. Consequently, $x_t$ is $\mathcal{F}_{t-1}$-measurable, which allows us to apply Lemma~\ref{lem:iterated_expectation_full} and Eq. \eqref{eq_relation_between_V_W} to express the robust regret as a sum of one-period expected value gaps.

\begin{lemma}\label{lem:regret_decomposition_full}
The robust regret of the PRA algorithm satisfies: 
\begin{equation}\label{eq:regret_decomposition_full}
    \regret
    \;\le\;
    \sum_{t=1}^T
    \mathbb{E}\!\left[V(\mathbf{P})-\widetilde{V}^\gamma(\hat{\mathbf{P}}_{t-1})\right]
    +
    \sum_{t=1}^T
    \mathbb{E}\!\left[\widetilde{W}^\gamma_{\hat{\mathbf{P}}_{t-1}}(x_t)-W_{\mathbf{P}}(x_t)\right].
\end{equation}
\end{lemma}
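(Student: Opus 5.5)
Since $x_t$ is computed from $\hat{\mathbf P}_{t-1}$ alone, it is $\mathcal F_{t-1}$-measurable. Lemma~\ref{lem:iterated_expectation_full} therefore applies, and by \eqref{eq:regret_equiv_form_full} we can write $\robust = T\,V(\mathbf P)$ and
\[
    \regret = T\,V(\mathbf P) - \min_{\rho\in\mathcal C(\mathbf P)} \sum_{t=1}^T \mathbb E\Bigl[\sum_{e\in E}\mathbf P(e)\,x_t(e)^\top v_\rho(e)\Bigr].
\]

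The first step is to push the minimum over $\rho$ inside the sum and the expectation. The minimum of a sum of functions of $\rho$ is at least the sum of their minima, and taking the minimum inside an expectation can only lower it. Hence
\[
    \min_{\rho\in\mathcal C(\mathbf P)} \sum_{t} \mathbb E\bigl[\cdots\bigr] \;\ge\; \sum_{t} \mathbb E\Bigl[\min_{\rho\in\mathcal C(\mathbf P)} \sum_e \mathbf P(e)\,x_t(e)^\top v_\rho(e)\Bigr] = \sum_t \mathbb E\bigl[W_{\mathbf P}(x_t)\bigr].
\]
Since this term enters the regret with a minus sign, it gives the upper bound
\[
    \regret \le \sum_{t=1}^T \mathbb E\bigl[V(\mathbf P) - W_{\mathbf P}(x_t)\bigr].
\]

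The second step is an add-and-subtract inside each summand. By \eqref{eq_relation_between_V_W}, $\widetilde V^\gamma(\hat{\mathbf P}_{t-1}) = \widetilde W^\gamma_{\hat{\mathbf P}_{t-1}}(x_t)$, so we can write
\[
    V(\mathbf P) - W_{\mathbf P}(x_t) = \bigl[V(\mathbf P) - \widetilde V^\gamma(\hat{\mathbf P}_{t-1})\bigr] + \bigl[\widetilde W^\gamma_{\hat{\mathbf P}_{t-1}}(x_t) - W_{\mathbf P}(x_t)\bigr].
\]
Summing over $t$ and using linearity of expectation gives \eqref{eq:regret_decomposition_full}.

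The only delicate points are technical. We need $\mathcal C(\mathbf P)$ to be nonempty and compact, which holds by assumption, so that $W_{\mathbf P}$ is well defined. We also need the expectations to be finite, which holds because every payoff is bounded by $U$ and the violation term is bounded. The case $t=1$, where $\hat{\mathbf P}_0=0$, is harmless: the penalized objective is then identically zero, so $\widetilde V^\gamma(\hat{\mathbf P}_0)=0$, and the identity \eqref{eq_relation_between_V_W} still holds.
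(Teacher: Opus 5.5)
Your proposal is correct and follows the paper's proof essentially step for step. You apply Lemma~\ref{lem:iterated_expectation_full}, lower-bound the minimum over $\rho\in\mathcal C(\mathbf P)$ by $\sum_t \mathbb{E}[W_{\mathbf P}(x_t)]$, and then add and subtract $\widetilde V^\gamma(\hat{\mathbf P}_{t-1}) = \widetilde W^\gamma_{\hat{\mathbf P}_{t-1}}(x_t)$. Your extra check of the case $t=1$ (where the penalized objective vanishes identically) is correct and is left implicit in the paper.
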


\smallskip \noindent\textbf{(ii) Bounded difference between $W_\mathbf{P}(x)$ and $\widetilde{W}_{\mathbf{P}}^\gamma(x)$ for sufficiently large $\gamma$.} The regret decomposition in~\eqref{eq:regret_decomposition_full} involves both the constrained value function $W_{\mathbf P}$ and the penalized value function $\widetilde W_{\mathbf P}^{\gamma}$. The key observation is that the calibration constraints define a system of linear inequalities. By Hoffman's lemma, the distance from any $\rho\in[0,1]^m$ to the feasible calibration set $\mathcal C(\mathbf P)$ can be bounded by a constant multiple of its calibration violation, together with the calibration tolerance $\varepsilon_T$. Therefore, any posterior mapping $\rho$ that violates calibration can be projected to a feasible calibrated mapping at a payoff loss controlled by its violation and by $\varepsilon_T$. When $\gamma$ is chosen large enough, the penalty term in $\widetilde W_{\mathbf P}^{\gamma}$ dominates this loss. As a result, minimizing the penalized objective over all $\rho\in[0,1]^m$ yields essentially the same value as minimizing the original payoff over $\rho\in\mathcal C(\mathbf P)$, up to the calibration-error term. Combined with the direct upper bound in~\eqref{eq:direct_property_1}, this gives the following uniform comparison.

\begin{lemma}\label{lem:equiv_penalized_origin}
    If the penalty coefficient $\gamma$ is sufficiently large, then for any $x\in (\Delta(\mathcal{A}))^m$, 
    \begin{equation}\label{eq:equiv_pena_orig}
        \Bigl|\widetilde{W}_\mathbf{P}^\gamma(x) - W_\mathbf{P}(x)\Bigr|\leq n\gamma\varepsilon_T.
    \end{equation}
    Consequently, 
    \begin{equation*}
        \Bigl|\widetilde{V}^\gamma(\mathbf{P}) - V(\mathbf{P})\Bigr|\leq n\gamma\varepsilon_T.
    \end{equation*}
\end{lemma}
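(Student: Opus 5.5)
The upper bound $\widetilde W_{\mathbf P}^\gamma(x)\le W_{\mathbf P}(x)+n\gamma\varepsilon_T$ is already recorded in \eqref{eq:direct_property_1}. It therefore remains to prove the lower bound $\widetilde W_{\mathbf P}^\gamma(x)\ge W_{\mathbf P}(x)-n\gamma\varepsilon_T$. The statement for $V$ then follows by taking $\max_x$ on both sides, since a uniform bound on two functions of $x$ transfers to their maxima.

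For the lower bound, fix $x$ and any $\rho\in[0,1]^m$, and write $f_x(\rho):=\sum_e\mathbf P(e)x(e)^\top v_\rho(e)$. This function is affine in $\rho$ with $|\partial f_x/\partial\rho(e)|=\mathbf P(e)|x(e)^\top(u_1-u_0)|\le 2U\mathbf P(e)$. Hence $|f_x(\rho)-f_x(\rho')|\le 2U\max_e|\rho(e)-\rho'(e)|$, i.e.\ $f_x$ is $2U$-Lipschitz in $\|\cdot\|_\infty$. I plan to show that there is $\rho'\in\mathcal C(\mathbf P)$ with $\|\rho-\rho'\|_\infty\le H\,\operatorname{viol}_{\mathbf P}(\rho)$. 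Given this, we get $f_x(\rho)+\gamma\operatorname{viol}_{\mathbf P}(\rho)\ge f_x(\rho')-2UH\operatorname{viol}_{\mathbf P}(\rho)+\gamma\operatorname{viol}_{\mathbf P}(\rho)\ge W_{\mathbf P}(x)$ whenever $\gamma\ge 2UH$. Minimizing over $\rho$ then gives $\widetilde W^\gamma_{\mathbf P}(x)\ge W_{\mathbf P}(x)$, which is even stronger than needed.

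The key step is the projection, and here I would apply Hoffman's lemma. The set $\mathcal C(\mathbf P)$ is the polyhedron $\{\rho: 0\le\rho\le 1,\ -\varepsilon_T b\le A\rho-c\le\varepsilon_T b\}$. Here the rows of $A$ are indexed by $(i,e^i)$ with entries $\mathbf P(e)\indicate{[e]_i=e^i}$, we set $c_{(i,e^i)}=e^i\mathbf P(e^i)$, and $b_{(i,e^i)}=\mathbf P(e^i)$. A $\rho\in[0,1]^m$ violates only the calibration rows, and by at most $|A\rho-c|$ componentwise. So Hoffman gives $\operatorname{dist}_\infty(\rho,\mathcal C(\mathbf P))\le H\,\|(|A\rho-c|-\varepsilon_T b)_+\|_1\le H\operatorname{viol}_{\mathbf P}(\rho)$, using nonemptiness of $\mathcal C(\mathbf P)$.

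The main obstacle is that the Hoffman constant should not depend on $\mathbf P$, as the subsequent remark claims. It does depend on $A$, which is scaled by $\mathbf P$. My first attempt would be to rescale: a change of variables $z(e)=\mathbf P(e)\rho(e)$ turns the calibration rows into a fixed $0/1$ incidence matrix, but it also turns the box into $0\le z\le\mathbf P$. Since Hoffman constants depend only on the matrix and not on the right-hand side, $H$ is then independent of $\mathbf P$ and $T$. The Lipschitz step must be redone in $z$-coordinates, where $f_x$ is $2U$-Lipschitz in $\|\cdot\|_\infty$ with respect to $z$ (coefficients bounded by $2U$), so this works cleanly. Profiles with $\mathbf P(e)=0$ do not affect anything and can be dropped. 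If the rescaling fails to give a uniform constant, I would fall back to allowing $H$ to depend on $\mathbf P$; $\gamma$ is only required to be "sufficiently large," so this still suffices for the lemma as stated.
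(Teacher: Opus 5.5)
Your proposal follows the paper's proof. Both take the upper bound from \eqref{eq:direct_property_1}. Both then pass to $q(e)=\mathbf P(e)\rho(e)$, so the calibration rows become the fixed $0/1$ incidence matrix with box $0\le q\le\mathbf P$ and the Hoffman constant is independent of $\mathbf P$ and $T$. Finally, both project onto $\mathcal F(\mathbf P)$ and absorb the resulting change in $d^\top q$ into the penalty once $\gamma$ is a constant multiple of $UH$.

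One slip needs fixing. In $q$-coordinates you have
\[
|d^\top(q-\bar q)|\le\|d\|_\infty\|q-\bar q\|_1\le 2U\|q-\bar q\|_1 .
\]
So the objective is $2U$-Lipschitz in $\ell_1$, not in $\ell_\infty$; in $\ell_\infty$ the constant is $\|d\|_1\le 2Um$. The $\ell_\infty$ bound you derived holds only in $\rho$-coordinates, where the weights $\mathbf P(e)$ sum to one. Apply Hoffman with $\mathrm{dist}_1$, as the paper does, and the argument goes through with $\gamma\ge 2UH$.

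You also observe that the residual of the $\varepsilon_T$-relaxed system is at most $\|Aq-b\|_1$, since at most one of the two one-sided residuals is positive per row. This is correct and gives the sharper bound $\widetilde W^\gamma_{\mathbf P}(x)\ge W_{\mathbf P}(x)$. The paper instead bounds that residual by $2(\|Aq-b\|_1+n\varepsilon_T)$. That is why it needs $\gamma\ge 4UH$ and gives up an $n\gamma\varepsilon_T$ term on this side as well.
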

Then, the regret bound turns to be
\begin{equation}\label{eq:regret_decomposition}
    \regret
     \le
    \sum_{t=1}^T
    \mathbb{E}\!\left[\widetilde{V}^\gamma(\mathbf{P})-\widetilde{V}^\gamma(\hat{\mathbf{P}}_{t-1})\right]
    +
    \sum_{t=1}^T
    \mathbb{E}\!\left[\widetilde{W}_{\hat{\mathbf{P}}_{t-1}}^{\gamma}(x_t)-\widetilde{W}_{\mathbf{P}}^\gamma(x_t)\right] + 2n\gamma\varepsilon_T\cdot T.
\end{equation}
This regret decomposition reduces the analysis to (i) establishing Lipschitz continuity of $\widetilde{V}^\gamma(\cdot)$ and $\widetilde{W}^\gamma_{\cdot}(x)$, and (ii) controlling the estimation error $\|\hat{\mathbf{P}}_{t-1}-\mathbf{P}\|_1$.

\smallskip
\noindent\textbf{(iii) Lipschitz continuity of $\widetilde{V}^\gamma(\cdot)$ and $\widetilde{W}_{\cdot}^\gamma(x)$.} The Lipschitz bound follows directly from the primal definition of the penalized value functions. For a fixed decision rule $x$ and posterior mapping $\rho$, changing the distribution from $\mathbf Q$ to $\mathbf Q'$ changes the payoff term by at most $U\|\mathbf Q-\mathbf Q'\|_1$. The calibration-violation penalty changes by at most $n\gamma\|\mathbf Q-\mathbf Q'\|_1$, because $|\rho(e)-e^i|\le 1$ and each forecast profile contributes once for each expert. Taking the minimum over $\rho$ and then the maximum over $x$ preserves this Lipschitz bound.

\begin{lemma}\label{lem_lipschitz}
The functions $\widetilde{V}^\gamma(\cdot)$ and $\widetilde{W}^\gamma_{\cdot}(x)$ are Lipschitz-continuous with respect to the underlying distribution with a uniform Lipschitz constant. Specifically, for any distributions $\mathbf{Q}$ and $ \mathbf{Q}^\prime$,
\begin{equation}\label{eq:lipschitz_VW}
    \begin{aligned}
        \sup_{x\in(\Delta(\mathcal{A}))^m}|\widetilde{W}^\gamma_\mathbf{Q}(x) - \widetilde{W}^\gamma_\mathbf{Q^\prime}(x)| &\leq (U+n\gamma)\|\mathbf{Q} - \mathbf{Q^\prime}\|_1, \\ 
        |\widetilde{V}^\gamma(\mathbf{Q}) - \widetilde{V}^\gamma(\mathbf{Q^\prime})|&\leq (U+n\gamma)\|\mathbf{Q} - \mathbf{Q^\prime}\|_1.
    \end{aligned}
\end{equation}
\end{lemma}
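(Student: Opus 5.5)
The plan is to prove the first inequality in \eqref{eq:lipschitz_VW} pointwise in $(x,\rho)$ and then pass through the $\min$ over $\rho$ and the $\max$ over $x$. Fix any $x\in(\Delta(\mathcal A))^m$ and any $\rho\in[0,1]^m$, and define the penalized objective
\[
F_{\mathbf Q}(x,\rho):=\sum_{e\in E}\mathbf Q(e)\,x(e)^\top v_\rho(e)+\gamma\operatorname{viol}_{\mathbf Q}(\rho),
\]
so that $\widetilde W^\gamma_{\mathbf Q}(x)=\min_{\rho\in[0,1]^m}F_{\mathbf Q}(x,\rho)$. The key step is the bound
\[
|F_{\mathbf Q}(x,\rho)-F_{\mathbf Q'}(x,\rho)|\le (U+n\gamma)\|\mathbf Q-\mathbf Q'\|_1 ,
\]
uniformly in $(x,\rho)$. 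Once this holds, the claim follows from the elementary fact that $|\min_\rho f-\min_\rho g|\le\sup_\rho|f-g|$, and likewise for maxima. The minimum exists because $[0,1]^m$ is compact and $F_{\mathbf Q}(x,\cdot)$ is continuous, but infima would suffice.

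For the payoff term, note that $v_\rho(e)=\rho(e)u_1+(1-\rho(e))u_0$ is a convex combination of vectors in $[-U,U]^d$, and $x(e)$ is a probability vector. Hence $|x(e)^\top v_\rho(e)|\le U$, and therefore
\[
\Bigl|\sum_e(\mathbf Q(e)-\mathbf Q'(e))\,x(e)^\top v_\rho(e)\Bigr|\le U\|\mathbf Q-\mathbf Q'\|_1 .
\]

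For the penalty term, use the reverse triangle inequality $\bigl||a|-|b|\bigr|\le|a-b|$ on each summand of $\operatorname{viol}$. This gives
\[
|\operatorname{viol}_{\mathbf Q}(\rho)-\operatorname{viol}_{\mathbf Q'}(\rho)|\le\sum_{i\in[n]}\sum_{e^i\in E^i}\sum_{e:[e]_i=e^i}|\mathbf Q(e)-\mathbf Q'(e)|\,|\rho(e)-e^i|.
\]
Since $\rho(e)$ and $e^i$ both lie in $[0,1]$, we have $|\rho(e)-e^i|\le1$. For each fixed $i$, the sets $\{e:[e]_i=e^i\}$ with $e^i\in E^i$ partition $E$, so the inner double sum equals $\|\mathbf Q-\mathbf Q'\|_1$. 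Summing over the $n$ experts yields $n\|\mathbf Q-\mathbf Q'\|_1$. Multiplying by $\gamma$ and adding the payoff bound gives the key inequality.

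Taking the minimum over $\rho$ then gives the first line of \eqref{eq:lipschitz_VW}, uniformly in $x$, so the supremum over $x$ obeys the same bound. For the second line, $\widetilde V^\gamma(\mathbf Q)=\max_x\widetilde W^\gamma_{\mathbf Q}(x)$, and the same min/max comparison applies:
\[
|\widetilde V^\gamma(\mathbf Q)-\widetilde V^\gamma(\mathbf Q')|\le\sup_x|\widetilde W^\gamma_{\mathbf Q}(x)-\widetilde W^\gamma_{\mathbf Q'}(x)|.
\]
No step is a real obstacle. The only point needing care is the partition argument in the penalty bound, which is what makes each profile contribute exactly once per expert and produces the factor $n$ rather than a dependence on $r$ or $m$.
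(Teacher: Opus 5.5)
Your proof is correct and follows essentially the same route as the paper: the paper also bounds the penalized objective $\Phi_{\mathbf Q}(x,\rho)$ pointwise by $(U+n\gamma)\|\mathbf Q-\mathbf Q'\|_1$, using $|x(e)^\top v_\rho(e)|\le U$ and the triangle inequality with $|\rho(e)-e^i|\le 1$ together with the once-per-expert partition of $E$. It then passes the bound through the minimum over $\rho$ and the maximum over $x$, exactly as you do.
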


\smallskip
\noindent\textbf{(iv) Estimation error of the empirical distribution.}
Since the forecast profiles are sampled independently from the true distribution $\mathbf P$, the empirical frequency $\hat{\mathbf P}_{t-1}(e)$ is an unbiased estimate of $\mathbf P(e)$ for each $e\in E$ and its variance is of order $1/(t-1)$. Summing these coordinate-wise estimation errors over the finite profile space yields the following bound.

\begin{lemma}\label{lem_esti_error}
    The expected error of the empirical distribution $\hat{\mathbf{P}}_{t-1}$ with respect to the true distribution $\mathbf{P}$ satisfies 
    \begin{equation}\label{eq:empirical_error_bound}
    \begin{aligned}
        &\mathbb{E}\bigl[\|\hat{\mathbf{P}}_{t-1}-\mathbf{P}\|_1\bigr] = 1, \qquad\qquad t = 1,\\ &\mathbb{E}\!\left[\|\hat{\mathbf{P}}_{t-1}-\mathbf{P}\|_1\right]
    \le
    \sqrt{\frac{m-1}{t-1}}, \ \  t\geq 2.
    \end{aligned}
\end{equation}
\end{lemma}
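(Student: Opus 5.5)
The plan is to handle the case $t=1$ directly and, for $t\ge 2$, to bound each coordinate of $\hat{\mathbf P}_{t-1}-\mathbf P$ by its standard deviation and then combine the coordinates with Cauchy--Schwarz.

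\textbf{Case $t=1$.} By definition $\hat{\mathbf P}_0\equiv 0$, so
\[
\|\hat{\mathbf P}_0-\mathbf P\|_1=\sum_{e\in E}\mathbf P(e)=1 .
\]
This is deterministic, so the expectation equals $1$ exactly.

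\textbf{Case $t\ge 2$: per-coordinate bound.} Set $k:=t-1$. Under Assumption~\ref{as:fixed_distribution}, for each $e\in E$ the count $k\,\hat{\mathbf P}_{k}(e)=\sum_{\tau\le k}\indicate{e_\tau=e}$ is Binomial$(k,\mathbf P(e))$. Hence $\hat{\mathbf P}_k(e)$ is unbiased for $\mathbf P(e)$, with variance $\mathbf P(e)(1-\mathbf P(e))/k$. Jensen's inequality gives
\[
\mathbb E\,|\hat{\mathbf P}_k(e)-\mathbf P(e)|\le \sqrt{\mathbf P(e)(1-\mathbf P(e))/k}.
\]

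\textbf{Case $t\ge 2$: combining coordinates.} Summing over $e$ and applying Cauchy--Schwarz over the $m$ terms,
\[
\sum_{e}\sqrt{\mathbf P(e)(1-\mathbf P(e))}\le \sqrt{m\sum_e \mathbf P(e)(1-\mathbf P(e))}=\sqrt{m\Bigl(1-\sum_e\mathbf P(e)^2\Bigr)}.
\]
Since $\sum_e\mathbf P(e)^2\ge 1/m$, the right-hand side is at most $\sqrt{m(1-1/m)}=\sqrt{m-1}$. Dividing by $\sqrt k$ gives the claimed bound $\sqrt{(m-1)/(t-1)}$.

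\textbf{Main obstacle.} The step that needs care is getting the constant $m-1$ rather than the cruder $m$. This comes from keeping the factor $(1-\mathbf P(e))$ and using the bound $\sum_e \mathbf P(e)^2\ge 1/m$. Everything else is standard.
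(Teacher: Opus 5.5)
Your proof is correct and uses the same ingredients as the paper's (binomial variance, Jensen, Cauchy--Schwarz), applied in the opposite order. The paper first applies Cauchy--Schwarz pathwise with weights $\mathbf P(e)$ to get $\|\hat{\mathbf P}_{t-1}-\mathbf P\|_1\le\sqrt{\chi^2(\hat{\mathbf P}_{t-1},\mathbf P)}$ and then Jensen on the whole sum, so the constant appears directly as $\sum_{e:\mathbf P(e)>0}(1-\mathbf P(e))\le m-1$; your per-coordinate Jensen followed by unweighted Cauchy--Schwarz instead needs the extra step $\sum_e\mathbf P(e)^2\ge 1/m$, and both give the stated bound (your intermediate quantity $\sum_e\sqrt{\mathbf P(e)(1-\mathbf P(e))/(t-1)}$ is in fact never larger than the paper's $\sqrt{\mathbb E\,\chi^2(\hat{\mathbf P}_{t-1},\mathbf P)}$).
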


Now, we can give a formal proof of theorem~\ref{thm_online_regret}. 

\begin{proof}[Proof of Theorem~\ref{thm_online_regret}]
    By Lemma~\ref{lem:regret_decomposition_full} and ~\ref{lem:equiv_penalized_origin}, we have
    \begin{equation*}
        \regret
         \le
        \sum_{t=1}^T
        \mathbb{E}\!\left[\widetilde{V}^\gamma(\mathbf{P})-\widetilde{V}^\gamma(\hat{\mathbf{P}}_{t-1})\right]
        +
        \sum_{t=1}^T
        \mathbb{E}\!\left[\widetilde{W}_{\hat{\mathbf{P}}_{t-1}}^{\gamma}(x_t)-\widetilde{W}_{\mathbf{P}}^\gamma(x_t)\right] + 2n\gamma \varepsilon_T T.
    \end{equation*}
    Applying Lemma~\ref{lem_lipschitz} and~\ref{lem_esti_error} leads to
    \begin{equation*}
        \begin{aligned}
            \regret
         &\le
        2\sum_{t=1}^T(U+n\gamma)
        \mathbb{E}\!\left[\|\mathbf{P}-\hat{\mathbf{P}}_{t-1}\|_1\right]+2n\gamma \varepsilon_T T\\
        &\leq (2U+2n\gamma)\sum_{t=2}^T\sqrt{\frac{m-1}{t-1}} + (2U+2n\gamma) + 2n\gamma \varepsilon_T T\\
        & \leq (4U+4n\gamma)\sqrt{(m-1)T} + (2U+2n\gamma) + 2n\gamma \varepsilon_T T\\
        & = O\bigl((U+n\gamma)\sqrt{mT} + n\gamma\varepsilon_T T\bigr).
        \end{aligned}
    \end{equation*}
\end{proof}

\subsection{An Example Illustrating the Necessity of Assumption~\ref{as:fixed_distribution}}
\label{subsec:impossibility}

To see why Assumption~\ref{as:fixed_distribution} on the generation of forecast profiles is needed, consider the following binary-state example. Let $\Omega=\{0,1\}$, where $\omega=1$ denotes rain, and let the decision-maker choose an action $a\in\{0,1\}$. The payoff is
\begin{equation*}
u(a,\omega)=\mathbf 1\{a=\omega\}.
\end{equation*}
Thus, if the posterior probability of rain under a forecast profile $e$ is $\rho(e)$, then action $1$ yields expected payoff $\rho(e)$, while action $0$ yields expected payoff $1-\rho(e)$.

There are two experts, each of whom may report one of the values $0,1/2,1$. Consider the following forecast profiles:
\begin{equation*}
e^0=\left(\frac12,\frac12\right),\qquad
e^L_1=\left(\frac12,0\right),\qquad
e^L_2=\left(0,\frac12\right),
\end{equation*}
and
\begin{equation*}
e^H_1=\left(\frac12,1\right),\qquad
e^H_2=\left(1,\frac12\right).
\end{equation*}
Assume for simplicity that $T$ is divisible by $3$. We construct two possible forecast sequences. In the first sequence, the first $T/3$ days have profile $e^0$, the next $T/3$ days have profile $e^L_1$, and the final $T/3$ days have profile $e^L_2$. The resulting empirical distribution assigns probability $1/3$ to each of $e^0,e^L_1,e^L_2$. The calibration constraints then imply
\begin{equation*}
\rho(e^L_1)=0,\qquad \rho(e^L_2)=0.
\end{equation*}
Moreover, conditioning on the days on which Expert 1 reports $1/2$ gives
\begin{equation*}
\frac{\rho(e^0)+\rho(e^L_1)}{2}=\frac12,
\end{equation*}
and hence $\rho(e^0)=1$. The same conclusion is obtained from the calibration constraint for Expert 2's report $1/2$. Therefore, under this empirical distribution, the robust-maxmin benchmark chooses action $1$ on profile $e^0$.

In the second sequence, the first $T/3$ days are again all equal to $e^0$, but the remaining days consist of $T/3$ repetitions of $e^H_1$ and $T/3$ repetitions of $e^H_2$. The resulting empirical distribution assigns probability $1/3$ to each of $e^0,e^H_1,e^H_2$. The calibration constraints imply
\begin{equation*}
\rho(e^H_1)=1,\qquad \rho(e^H_2)=1.
\end{equation*}
Using again the calibration constraint for Expert 1's report $1/2$, we obtain
\begin{equation*}
\frac{\rho(e^0)+\rho(e^H_1)}{2}=\frac12,
\end{equation*}
and hence $\rho(e^0)=0$. The same conclusion follows from Expert 2's report $1/2$. Thus, under this empirical distribution, the robust-maxmin benchmark chooses action $0$ on profile $e^0$.

The two sequences are indistinguishable to the decision-maker during the first $T/3$ days, since the observed forecast profile is $e^0$ on every one of these days in both sequences. Therefore, any online algorithm must use the same mixed action on these days in the two sequences. Let $q_t$ denote the probability with which the algorithm chooses action $1$ on day $t\le T/3$. In the first sequence, the benchmark payoff on profile $e^0$ is achieved by action $1$, so the regret contribution on day $t$ is $1-q_t$. In the second sequence, the benchmark payoff on profile $e^0$ is achieved by action $0$, so the regret contribution on day $t$ is $q_t$. Hence the two regret contributions sum to $1$ on each of the first $T/3$ days. It follows that at least one of the two sequences gives regret at least
\begin{equation*}
\frac12\cdot \frac{T}{3}=\frac{T}{6}.
\end{equation*}
Consequently, no online algorithm can guarantee sublinear robust regret uniformly over arbitrary deterministic orderings of forecast profiles. This illustrates why the assumption that forecast profiles are sampled from a fixed distribution is needed. Without this assumption, the final empirical distribution may induce a robust-maxmin benchmark whose optimal action for an early forecast profile cannot be inferred from the information available when that action is chosen.

\section{Online State Feedback Model}\label{sec:online_state_feedback}

In this section, we study the decision-making problem under state feedback. In this feedback model, the decision-maker observes the forecast profile $e_t$ before choosing an action and observes the realized state $\omega_t$ after the decision is made. Therefore, the realized payoff of every action can be evaluated ex post from the observed state. This additional feedback allows us to benchmark the decision-maker against the best context-dependent decision rule in hindsight, rather than against the robust-maxmin benchmark.

Specifically, we treat forecast profiles as contexts and actions as arms. This reduces the problem to a contextual full-information online learning problem with a finite context space $E$ and a finite action set $\mathcal A$. The natural benchmark is the best mapping $f:E\to\mathcal A$ from forecast profiles to actions in hindsight. This benchmark represents the best fixed context-dependent rule that could have been chosen after observing the entire sequence of forecast profiles and realized states.

\begin{definition}[Contextual Benchmark]
    Given a realized sequence $\{(e_t,\omega_t)\}_{t=1}^T$, the contextual optimal-in-hindsight benchmark is defined as
    \begin{equation}\label{eq_contextual_benchmark}
        \mathrm{Cont}
        :=
        \max_{f:E\to\mathcal{A}}
        \sum_{t=1}^T
        u(f(e_t),\omega_t).
    \end{equation}
\end{definition}

Given an online algorithm that chooses actions $\{a_t\}_{t=1}^T$, its contextual regret is defined as the gap between the contextual benchmark and the cumulative payoff achieved by the algorithm.

\begin{definition}[Contextual Regret]
    The contextual regret of an online algorithm is defined as
    \begin{equation}\label{eq_contextual_regret}
        \mathrm{Reg}^{\prime}
        :=
        \max_{f:E\to\mathcal{A}}
        \mathbb{E}
        \left[
            \sum_{t=1}^T
            \bigl(
                u(f(e_t),\omega_t)
                -
                u(a_t,\omega_t)
            \bigr)
        \right],
    \end{equation}
    where the expectation is taken over the algorithm's internal randomness.
\end{definition}

The benchmark in~\eqref{eq_contextual_benchmark} is natural in the state-feedback model. Since the realized state is observed after each day, the decision-maker can evaluate the payoff that each action would have obtained at that day. Hence, learning can be performed separately for each forecast profile. In particular, for each $e\in E$, the decision-maker maintains an independent online learning algorithm over the action set $\mathcal{A}$, and updates only the algorithm corresponding to the realized forecast profile. The resulting procedure is summarized in Algorithm~\ref{alg_contextual_hedge}.

\begin{algorithm}[t]
\caption{Contextual Hedge Algorithm}\label{alg_contextual_hedge}
\begin{algorithmic}[1]
\STATE \textbf{Parameters:} Learning rate $\eta=\frac{1}{U}\sqrt{\frac{m\log d}{2T}}$, $m=|E|$
\STATE \textbf{Initialization:} For each $e\in E$, set $w_{1,e}(a)=1$ for all $a\in\mathcal{A}$
\FOR{$t=1,2,\ldots,T$}
    \STATE Observe the current forecast profile $e_t\in E$
    \STATE Compute the mixed strategy for $e_t$:
    \STATE \begin{equation*}
        x_t(e_t)(a)
        =
        \frac{w_{t,e_t}(a)}
        {\sum_{a'\in\mathcal{A}} w_{t,e_t}(a')},
        \qquad \forall a\in\mathcal{A}.
    \end{equation*}
    \STATE Sample an action $a_t\in\mathcal{A}$ according to $x_t(e_t)$
    \STATE Observe the realized state $\omega_t$
    \STATE Compute the payoff $u(a,\omega_t)$ for each $a\in\mathcal{A}$
    \STATE Update only the weights associated with $e_t$ by using the Hedge algorithm:
    \STATE \begin{equation*}
        w_{t+1,e_t}(a)
        =
        w_{t,e_t}(a)
        \exp\!\bigl(\eta u(a,\omega_t)\bigr),
        \qquad a\in\mathcal{A}.
    \end{equation*}
    \STATE For all $e\neq e_t$, keep the weights unchanged:
    \STATE \begin{equation*}
        w_{t+1,e}(a)=w_{t,e}(a),
        \quad a\in\mathcal{A}.
    \end{equation*}
\ENDFOR
\end{algorithmic}
\end{algorithm}

The following result shows that this simple algorithm yields no contextual regret.

\begin{theorem}\label{thm_contextual_regret_full_feedback}
    With state feedback and the learning rate in Algorithm~\ref{alg_contextual_hedge}, Algorithm~\ref{alg_contextual_hedge} guarantees
    \begin{equation*}
        \mathrm{Reg}^\prime
        \le
        2U\sqrt{2m\log d\cdot T},
    \end{equation*}
    where $m = |E|$ and $d = |\mathcal{A}|$.
\end{theorem}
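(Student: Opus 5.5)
The plan is to decompose the contextual regret context by context, apply the standard Hedge regret bound to each context separately, and then combine the per-context bounds with Cauchy--Schwarz.

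\textbf{Step 1: decomposition by context.} Fix any $f:E\to\mathcal A$. For each $e\in E$ let $T_e:=\sum_{t\le T}\indicate{e_t=e}$, so that $\sum_{e}T_e=T$. The sequence $(e_t,\omega_t)$ is fixed (oblivious), and only the learner randomizes. We therefore write
\[
\mathbb E\Bigl[\sum_{t=1}^T\bigl(u(f(e_t),\omega_t)-u(a_t,\omega_t)\bigr)\Bigr]=\sum_{e\in E}\sum_{t:\,e_t=e}\bigl(u(f(e),\omega_t)-x_t(e)^\top u(\cdot,\omega_t)\bigr).
\]
This uses that $a_t\sim x_t(e_t)$, where $x_t(e_t)$ is determined by past states. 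For each fixed $e$, the weights $w_{\cdot,e}$ change only on days with $e_t=e$. Hence the inner sum is exactly the regret of a standalone Hedge instance. That instance runs for $T_e$ rounds over $d$ actions, with reward vectors $u(\cdot,\omega_t)\in[-U,U]^d$, and is compared with the fixed action $f(e)$. Since the maximum over $f$ separates across contexts, it suffices to bound each context's regret against its best fixed action.

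\textbf{Step 2: per-context Hedge bound.} We invoke the standard Hedge guarantee for rewards in $[-U,U]$ with a fixed learning rate $\eta$: over $T_e$ rounds the regret is at most $\frac{\log d}{\eta}+\frac{\eta}{2}U^2T_e$. If needed, we obtain this by shifting rewards to $[0,2U]$, since a shift does not change the Hedge probabilities, and then applying Hoeffding's lemma in the potential argument. Hoeffding's lemma applied to a range of width $2U$ gives $\eta^2(2U)^2/8=\eta^2U^2/2$ per round, which is the constant above.

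\textbf{Step 3: summing and tuning.} Summing over $e$ gives
\[
\mathrm{Reg}'\le\frac{m\log d}{\eta}+\frac{\eta U^2T}{2}.
\]
With $\eta=\frac1U\sqrt{\frac{m\log d}{2T}}$ this equals $U\sqrt{2m\log d\,T}+\tfrac12U\sqrt{m\log d\,T/2}\le 2U\sqrt{2m\log d\cdot T}$.

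The main point requiring care is Step 1. We must justify that a single global $\eta$, tuned to $T$ rather than to the unknown $T_e$, still works. The linear form of the bound in $T_e$ makes this automatic after summation, so no Cauchy--Schwarz over unequal $T_e$ is actually needed. We must also check that interleaving rounds of different contexts does not affect each instance's analysis, because each Hedge potential depends only on that context's own rounds.
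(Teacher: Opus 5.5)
Your proposal is correct and follows the paper's proof. Both decompose the contextual regret into $m$ independent Hedge instances on the subsequences $\{t: e_t=e\}$, bound each by $\frac{\log d}{\eta}$ plus a term linear in $T_e$, sum using $\sum_e T_e=T$, and substitute the given $\eta$. The only differences are minor. You derive the per-context bound via Hoeffding's lemma with second-order term $\frac{\eta}{2}U^2T_e$, where the paper cites the looser $2\eta U^2T_e$, so you obtain the sharper total $\frac{5}{4}U\sqrt{2m\log d\cdot T}$. And, as you note yourself, the Cauchy--Schwarz step mentioned in your opening is not needed.
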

\begin{proof}
    See Appendix~\ref{ap:proof_of_online_state_feedback}.
\end{proof}

\begin{remark}
    If the decision-maker observes only the payoff of the sampled action, rather than the realized state, the same reduction still applies by running an independent EXP3 algorithm \citep{auer2002nonstochastic} for each forecast profile. The resulting algorithm achieves no contextual regret under bandit feedback, with regret bound of order $O\!\left(U\sqrt{T|E||\mathcal A|\log|\mathcal A|}\right),$ compared with the bound $O\!\left(U\sqrt{T|E|\log|\mathcal A|}\right)$ obtained under state feedback.
\end{remark}

Algorithm~\ref{alg_contextual_hedge} provides a simple way to achieve no contextual regret, but it is not necessarily optimal in terms of regret bounds. An interesting open question is whether the calibration constraints can be exploited to obtain sharper regret guarantees, and what the optimal regret rate is in the state-feedback model.

\section{Conclusion and Future Work}

This paper develops a framework for decision-making with multiple calibrated forecasts when the decision-maker knows only that experts satisfy calibration. We first introduce a robust max-min benchmark, which is the best payoff that can be guaranteed against all posterior mappings consistent with calibration. This benchmark captures the decision-relevant information contained in the joint pattern of forecasts without requiring knowledge of experts' signal structures, priors, or conditional independence relationships. We show that the robust max-min benchmark is tractable and admits a linear-programming formulation. It also dominates the optimal-in-hindsight benchmark up to calibration error, reflecting the value of conditioning on the full forecast profile rather than using only a single expert. At the same time, it can be strictly below the Bayesian benchmark, highlighting the additional value of knowing the full data-generating process.

We then study online learning under two feedback structures. Under forecast-only feedback, where the decision-maker observes forecast profiles but not realized states or payoffs, we propose a plug-in robust aggregation algorithm and prove sublinear robust regret relative to the robust benchmark when forecast profiles are drawn from a fixed distribution. Under state feedback, where the realized state is observed after each decision, the learner can compete with the stronger contextual optimal-in-hindsight benchmark, and a simple contextual full-information learning algorithm achieves no contextual regret.

Several questions remain open. First, the forecast-only result relies on the assumption that forecast profiles are sampled from a fixed distribution. An important direction is to relax this assumption, for example by allowing adversarially controlled noise around an underlying distribution or time-varying forecast distributions subject to suitable stability. Second, the regret bounds obtained here may not be optimal. It remains open whether sharper rates can be achieved, either in the dependence on \(T\) or in the dependence on \(m\) and \(d\). In the state-feedback model, our algorithm uses a simple reduction to contextual online learning and does not exploit the structure imposed by calibration constraints. Designing algorithms that use these constraints more directly may lead to tighter regret bounds.

\bibliographystyle{apalike}
\bibliography{ref}
\newpage
\appendix

\section{Strong Duality Theorem and Hoffman Lemma}\label{app:duality}

First, we give the standard strong duality theorem for linear programming used to derive the dual representation; see, e.g.,
\cite[Ch.~5]{boyd2004convex} or \cite[Ch.~4]{vanderbei1998linear}.

\begin{theorem}[Strong Duality Theorem]\label{thm:strong_duality_lp}
Consider the primal--dual pair of linear programs
\begin{align*}
\text{(P)}\quad &\min_{z\in\mathbb{R}^n}\ \ c^\top z \quad
\text{s.t.}\quad Gz = h,\ \ z\ge 0,\\
\text{(D)}\quad &\max_{y\in\mathbb{R}^p}\ \ h^\top y \quad
\text{s.t.}\quad G^\top y \le c.
\end{align*}
If \textup{(P)} is feasible and has a finite optimal value (equivalently, admits an optimal solution),
then \textup{(D)} is feasible, attains its optimum, and the optimal values coincide:
\[
\min\textup{(P)}=\max\textup{(D)}.
\]
The symmetric statement holds with the roles of \textup{(P)} and \textup{(D)} interchanged.
\end{theorem}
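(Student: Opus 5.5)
The plan is to prove the standard LP strong duality in two stages: weak duality first, then attainment and equality via a Farkas-type alternative. For weak duality, I take any feasible $z$ for (P) and any feasible $y$ for (D). Then $h^\top y = (Gz)^\top y = z^\top G^\top y \le z^\top c$, using $z\ge 0$ and $G^\top y\le c$. Hence $\max\textup{(D)}\le\min\textup{(P)}$ whenever both problems are feasible. It also follows that if (P) is feasible and bounded, any feasible $y$ gives a dual value no larger than $c^\top z^\ast$.

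For the reverse inequality, let $z^\ast$ be an optimal solution of (P) with value $v=c^\top z^\ast$; its existence is part of the hypothesis. I claim that for every $\epsilon>0$ the system
\[
Gz = h,\qquad c^\top z \le v-\epsilon,\qquad z\ge 0
\]
has no solution. The augmented matrix $\binom{G}{c^\top}$ generates a finitely generated cone $K=\{(Gz,\,c^\top z + s): z\ge 0,\ s\ge 0\}$. Infeasibility means $(h, v-\epsilon)\notin K$. Since finitely generated cones are closed (Carath\'eodory plus compactness, or Weyl's theorem), the separating hyperplane theorem gives $(y,\mu)$ such that
\[
y^\top G z + \mu\,(c^\top z + s)\ \ge\ 0 \quad \text{for all } z\ge0,\ s\ge 0,
\]
\[
y^\top h + \mu\,(v-\epsilon) < 0 .
\]
Taking $s\to\infty$ forces $\mu\ge 0$, and taking $z=e_j$ gives $G^\top y + \mu c\ge 0$.

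The main obstacle is excluding the degenerate case $\mu=0$; this is where feasibility of (P) enters. If $\mu=0$, then $G^\top y\ge 0$ and $h^\top y<0$. But $h=Gz^\ast$ with $z^\ast\ge0$ gives $h^\top y = (z^\ast)^\top G^\top y\ge 0$, a contradiction. Hence $\mu>0$. Rescaling and setting $\bar y = -y/\mu$ yields $G^\top\bar y\le c$, so $\bar y$ is dual feasible, and $h^\top \bar y > v-\epsilon$.

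Combining with weak duality, $\sup\textup{(D)} = v$. To upgrade the supremum to a maximum, I would apply the same argument with $\epsilon=0$ and a non-strict target, or argue directly that the dual LP value is attained. The feasible region of (D) is a polyhedron, the objective is bounded above on it, so an LP attains its finite supremum (e.g., at a vertex or via the closedness of the image of a polyhedron under a linear map). The symmetric statement follows by rewriting (D) in standard form: split $y=y^+-y^-$ and add slacks, so that its dual is equivalent to (P), then apply the result just proved.
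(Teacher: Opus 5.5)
Your proof is correct and takes a different route from the paper, which gives no proof at all: it quotes the result as a textbook fact, citing \cite[Ch.~5]{boyd2004convex} and \cite[Ch.~4]{vanderbei1998linear}. It uses the result only to dualize the inner minimization over the compact polytope $\mathcal F_p$ in the proof of Theorem~\ref{prop_compute_RMM}. You supply a self-contained argument, and its core is sound. Weak duality, the separation of $(h,v-\epsilon)$ from the closed cone $K$, the use of a primal feasible point to force $\mu>0$, and the standard-form reduction for the symmetric statement all check out. In fact you only need some feasible $z$ and a finite $v$, not an optimal $z^\ast$.

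The one soft spot is attainment. Rerunning the argument with $\epsilon=0$ cannot work: the system $Gz=h$, $c^\top z\le v$, $z\ge 0$ is feasible (witness $z^\ast$), so there is nothing to separate. Appealing to attainment of a bounded linear objective on a polyhedron is valid, but it imports Minkowski--Weyl. The ``vertex'' version also fails as stated when $G$ lacks full row rank, since $\{y: G^\top y\le c\}$ then contains lines; you would additionally need $h=Gz^\ast\perp\ker G^\top$.

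A cleaner fix is to apply the theorem of the alternative directly to the dual system $G^\top y\le c$, $h^\top y\ge v$. An infeasibility certificate $(\lambda,\theta)\ge 0$ with $G\lambda=\theta h$ and $c^\top\lambda<\theta v$ gives one of two contradictions:
\begin{itemize}
\item if $\theta>0$, the point $\lambda/\theta$ is primal feasible with cost below $v$;
\item if $\theta=0$, the direction $\lambda$ is an improving recession direction along which (P) is unbounded below.
\end{itemize}
So the system is feasible, and equality together with dual attainment follow in one step, with no $\epsilon$-limit.

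As for what each route buys: your separation proof is short and non-constructive, and is closest in spirit to the separating-hyperplane argument behind Slater's theorem in Boyd--Vandenberghe. The simplex-based proof in Vanderbei instead reads an optimal dual solution off the final dictionary. As a by-product it also establishes the ``finite value iff attained'' equivalence that the statement takes for granted.
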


Then, we present a standard form of Hoffman's lemma for polyhedral systems; see
\cite{hoffman1952approximate}. We use this lemma to relate the $\ell_1$-residual of linear constraints to the distance to the corresponding feasible set.

\begin{theorem}[Hoffman Lemma]\label{thm:hoffman}
Let $M\in\mathbb{R}^{r\times m}$ and $h\in\mathbb{R}^r$, and define the polyhedron
\[
\mathcal{S}:=\{z\in\mathbb{R}^m: Mz\le h\}.
\]
Assume that $\mathcal S$ is nonempty. Then there exists a constant $H>0$ (depending only on $M$) such that for every $z\in\mathbb{R}^m$,
\begin{equation}\label{eq:hoffman_bound_general}
\mathrm{dist}_1(z,\mathcal{S})
\;\le\;
H\,\|(Mz-h)_+\|_1,
\end{equation}
where $(\cdot)_+$ is taken element-wise and $\mathrm{dist}_1(z,\mathcal{S})
:=\inf_{s\in\mathcal{S}}\|z-s\|_1$ denotes the $\ell_1$-distance. Moreover, the same statement holds for equality constraints by writing $Mz=h$ as
$Mz\le h$ and $-Mz\le -h$.
\end{theorem}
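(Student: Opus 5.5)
The plan is to express $\mathrm{dist}_1(z,\mathcal S)$ as the optimal value of a linear program, dualize it with the Strong Duality Theorem (Theorem~\ref{thm:strong_duality_lp}), and observe that the dual feasible region depends only on $M$. Then $H$ can be taken as a bound on the dual vertices.

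Fix $z\in\mathbb R^m$. I would first write
\[
\mathrm{dist}_1(z,\mathcal S)=\min_{s,\tau}\ \mathbf 1^\top\tau \quad\text{s.t.}\quad \tau\ge z-s,\ \ \tau\ge s-z,\ \ Ms\le h .
\]
This program is feasible because $\mathcal S\neq\emptyset$, and its objective is bounded below by $0$. Strong duality therefore applies, after routine conversion to the standard form of Theorem~\ref{thm:strong_duality_lp} by splitting free variables and adding slacks. To identify the dual, I would use the Lagrangian with a multiplier $\lambda\ge 0$ on $Ms\le h$ and substitute $s=z+\delta$. The inner infimum of $\|\delta\|_1+\lambda^\top M\delta+\lambda^\top(Mz-h)$ over $\delta$ is finite exactly when $\|M^\top\lambda\|_\infty\le 1$, and then equals $\lambda^\top(Mz-h)$. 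This yields
\[
\mathrm{dist}_1(z,\mathcal S)=\max\bigl\{\lambda^\top(Mz-h):\ \lambda\in\Lambda\bigr\},\qquad \Lambda:=\{\lambda\in\mathbb R^r:\ \lambda\ge 0,\ \|M^\top\lambda\|_\infty\le 1\}.
\]

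The key structural point is that $\Lambda$ depends only on $M$, not on $z$ or $h$. Since $\Lambda\subseteq\mathbb R^r_{\ge 0}$, it contains no line, so it is a pointed polyhedron. It is nonempty because $0\in\Lambda$, and hence it has finitely many vertices $\lambda^{(1)},\dots,\lambda^{(K)}$. A linear objective that is bounded above on a pointed polyhedron attains its maximum at a vertex. Here the objective is bounded because the primal value is finite. So the maximum is attained at some $\lambda^{(k)}$.

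I would then set $H:=\max\bigl(1,\max_k\|\lambda^{(k)}\|_\infty\bigr)$, which depends only on $M$. Using $\lambda^{(k)}\ge 0$,
\[
\mathrm{dist}_1(z,\mathcal S)=\lambda^{(k)\top}(Mz-h)\le \lambda^{(k)\top}(Mz-h)_+\le H\,\|(Mz-h)_+\|_1 .
\]
This is \eqref{eq:hoffman_bound_general}. The equality-constraint version follows immediately by stacking $\binom{M}{-M}$ and $\binom{h}{-h}$, which gives a new matrix that still depends only on $M$.

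The main obstacle is the step guaranteeing that the dual optimum is attained at a vertex of a set independent of $(z,h)$, uniformly over all $z$. This rests on two facts: pointedness of $\Lambda$, which comes from $\lambda\ge 0$, and finiteness of the dual value, which comes from primal feasibility. Once these are in place, the finite vertex set gives a uniform constant. Deriving the dual in a form that exactly matches Theorem~\ref{thm:strong_duality_lp} is routine bookkeeping.
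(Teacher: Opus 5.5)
Your proof is correct, but there is no in-paper argument to compare it against: the paper states this result in the appendix as a standard fact and cites \citet{hoffman1952approximate} without proving it.

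Your route is the classical LP-duality proof, and it is self-contained given Theorem~\ref{thm:strong_duality_lp}, which the paper already states. You write $\mathrm{dist}_1(z,\mathcal S)$ as a linear program and dualize it to $\max\{\lambda^\top(Mz-h):\lambda\in\Lambda\}$ with $\Lambda=\{\lambda\ge 0:\|M^\top\lambda\|_\infty\le 1\}$. You then bound this over the finitely many vertices of the pointed polyhedron $\Lambda$.

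What the proof adds beyond the citation is an explicit constant, $H=\max(1,\max_k\|\lambda^{(k)}\|_\infty)$. Its independence from $h$ is visible directly, because $\Lambda$ involves only $M$. This is exactly the property the paper relies on in Lemma~\ref{lem:equiv_penalized_origin}. There the right-hand sides $b(\mathbf P)$, $\epsilon(\mathbf P)$ and the box bound $\mathbf P$ all vary with $\mathbf P$, while $\gamma\ge 4UH$ must be chosen once and uniformly. Your stacking argument for equalities likewise keeps $H$ a function of the incidence matrix alone.

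One point is worth making explicit. Boundedness of the dual objective follows from an elementary weak-duality inequality: for any $s\in\mathcal S$ and $\lambda\in\Lambda$,
\[
\lambda^\top(Mz-h)\le (M^\top\lambda)^\top(z-s)\le\|z-s\|_1 .
\]
This is also why the possible unboundedness of $\Lambda$ is harmless. $\Lambda$ is unbounded when some nonzero $\lambda\ge 0$ has $M^\top\lambda=0$, and in that case nonemptiness of $\mathcal S$ forces $\lambda^\top h\ge 0$. So recession directions never increase the objective, and the maximum is attained at a vertex.
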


\section{Missing Proofs}

\subsection{Proofs of Section~\ref{sec:offline}}\label{ap:proofs_of_section_offline}

\begin{proof}[Proof of Proposition~\ref{thm_indepen_strate_optimal}]
    The two strategy spaces contain different objects, so we compare them through their induced marginals.
    For any correlated distribution $y\in\Delta(\mathcal A^m)$, define the marginal mixed action
    \[
        x_y(e)(a):=\sum_{\alpha\in\mathcal A^m:\alpha(e)=a}y(\alpha),
        \qquad e\in E,\ a\in\mathcal A.
    \]
    Then $x_y\in(\Delta(\mathcal A))^m$, and for every $\rho\in\mathcal C(\edp_T)$,
    \[
        \sum_{\alpha\in\mathcal A^m}y(\alpha)\sum_{e\in E}
        \edp_T(e)\mathbf 1_{\alpha(e)}^\top v_\rho(e)
        =
        \sum_{e\in E}\edp_T(e)x_y(e)^\top v_\rho(e).
    \]
    Thus every correlated strategy has a profile-wise mixed strategy with exactly the same payoff against every feasible $\rho$. Conversely, given any $x\in(\Delta(\mathcal A))^m$, let $y_x$ be the product distribution over pure mappings $\alpha\in\mathcal A^m$ whose coordinate marginal at profile $e$ is $x(e)$. The same identity shows that $y_x$ and $x$ have the same payoff against every $\rho$. Taking the minimum over $\rho$ and then the maximum over strategies on the two sides proves the result.
\end{proof}

\begin{proof}[Proof of Theorem~\ref{prop_compute_RMM}]
    For notational convenience, let $p:=\edp_T$. For any $x\in(\Delta(\mathcal A))^m$, define $d_x,c_x:E\to\mathbb R$ by
    \[
        d_x(e):=x(e)^\top (u_1-u_0),
        \quad
        c_x(e):=x(e)^\top u_0,
        \quad \forall e\in E.
    \]
    where $u_1:=(u(a, 1))_{a\in \mathcal{A}}$ and $u_0:=(u(a, 0))_{a\in \mathcal{A}}$. For any  $\rho:E\to[0,1]$, define $q_\rho:E\to\mathbb R$ by
    \[
        q_\rho(e):=p(e)\rho(e),
        \quad \forall e\in E.
    \]
    Then the objective can be written as
    \[
        \sum_{e\in E}
        p(e)x(e)^\top
        \bigl(\rho(e)u_1+(1-\rho(e))u_0\bigr)
        =
        \sum_{e\in E} \bigl[p(e)c_x(e) + q_\rho(e)d_x(e)\bigr]
        =
        c_x^\top p + d_x^\top q_\rho,
    \]
    where, since $E$ is finite, we identify mappings (i.e., $p, c_x, q_\rho, d_x$) from $E$ to $\mathbb R$ with vectors in $\mathbb R^m$ under an arbitrary fixed ordering of $E$. 
    
    We next rewrite the calibration constraints in terms of $q_\rho$. Let
    \[
        \mathcal J:=\{(i,\zeta): i\in[n],\ \zeta\in E^i\}, 
    \]
    so that $r=|\mathcal J|$. We index the rows by elements of $\mathcal J$ and the columns by elements of $E$. Define the incidence matrix $A\in\{0,1\}^{r\times m}$ by
    \[
        A_{(i,\zeta),e}
        =
        \indicate{[e]_i=\zeta},
        \quad \forall
        (i,\zeta)\in\mathcal J,\ e\in E.
    \]
    Define $b,\epsilon\in\mathbb R^r$ by
    \[
        b_{(i,\zeta)}
        :=
        \left(\sum_{e:[e]_i=\zeta}p(e)\right)\zeta,
        \quad
        \epsilon_{(i,\zeta)}
        :=
        \left(\sum_{e:[e]_i=\zeta}p(e)\right)\varepsilon_T .
    \]
    Then the calibration constraints are equivalently written as
    \[
        Aq_\rho\leq b + \epsilon,
        \quad
        -Aq_\rho\leq \epsilon-b,
        \quad
        0\leq q_\rho\leq p .
    \]
    Denote the corresponding feasible set by
    \[
        \mathcal{F}_p
        :=
        \left\{
        q_\rho\in\mathbb R^m:
        Aq_\rho\leq b+\epsilon,\
        -Aq_\rho\leq \epsilon-b,\
        0\leq q_\rho\leq p
        \right\}.
    \]
    
    Therefore, for a fixed decision rule $x$, the worst-case expected payoff equals
    \[
        c_x^\top p+\min_{q_\rho\in\mathcal{F}_p} d_x^\top q_\rho.
    \]
    For each fixed $x$, this is a linear program in $q_\rho$. Since $\mathcal{C}(\edp_T)$ is nonempty, $\mathcal{F}_p$ is nonempty and the problem is feasible. Moreover, the problem attains a finite optimal value since $\mathcal F_p$ is a compact polytope. Hence, the strong duality theorem~\ref{thm:strong_duality_lp} applies. 

    To derive the dual, introduce slack variables $\xi^+, \xi^-\in \mathbb{R}_+^r, s\in \mathbb R^m_+$ and write the inner minimization as 
    \begin{align*}
        \min_{q_\rho\in\mathbb R^m}
        \quad & \  d_x^\top q_\rho  \\
        \text{s.t.}\quad
        &
        \ Aq_\rho + \xi^+ = b+\epsilon,\\
        & -Aq_\rho + \xi^- = \epsilon-b,\\
        &\ \  q_\rho + s= p, \\
        & q_\rho, \xi^+, \xi^-, s\geq 0. 
    \end{align*}
    Introducing dual variables $\lambda^+, \lambda^-\in\mathbb R^r$ and $\mu \in\mathbb R^m,$ the dual format can be written as
    \begin{align*}
        \max_{\lambda^+,\lambda^-,\mu}
        \quad
        &
        (b+\epsilon)^\top\lambda^+ + (\epsilon-b)^\top\lambda^- + p^\top\mu
        \\
        \text{s.t.}\quad
        &
        A^\top\lambda^+
        -
        A^\top\lambda^-
        +
        \mu
        \leq 
        d_x,
        \\
        &
        \lambda^+,\lambda^-,\mu \leq 0 .
    \end{align*}
    
    Therefore, the robust-maxmin benchmark is equal to $T$ times the value of the linear program: 
    \begin{align}
        \max_{x,\lambda^+,\lambda^-,\mu}
        \quad
        &
        p^\top c_x
        +
        (b+\epsilon)^\top\lambda^+
        +
        (\epsilon-b)^\top\lambda^-
        +
        p^\top\mu
        \label{eq_RMM_LP_dualized}
        \\
        \text{s.t.}\quad
        &
        A^\top\lambda^+
        -
        A^\top\lambda^-
        +
        \mu
        \leq 
        d_x,
        \notag\\
        &
        \lambda^+,\lambda^-, \mu\leq 0,
        \notag\\
        &x(e) \in \Delta(\mathcal{A}),
        \quad \forall e\in E.
        \notag
    \end{align}
    Thus, the linear program contains $md+m+2r$ variables and $md+3m+2r$ constraints. It follows that the robust-maxmin benchmark is computable in time polynomial in $m$, $d$, and $r$ using standard linear-programming algorithms.
\end{proof}

\begin{proof}[Proof of Proposition~\ref{prop:robust_vs_oih}]
    Fix an expert $i\in[n]$, and let $\sigma^\ast:E^i\to\mathcal A$ be a policy that attains $U^i$. Define a decision rule $x^\ast\in(\Delta(\mathcal A))^m$ by
    \[
        x^\ast(e)=\mathbf 1_{\sigma^\ast([e]_i)},
        \quad e\in E,
    \]
    where $\mathbf 1_a$ denotes the unit vector corresponding to action $a$. That is, under forecast profile $e$, the decision-maker chooses the action $\sigma^\ast([e]_i)$. By the definition of the robust-maxmin benchmark,
    \begin{equation*}
        \begin{aligned}
            \robust &\geq \min_{\rho\in \mathcal{C}(\edp_T)} T \sum_{e\in E} \edp_T(e)x^\ast(e)^\top v_\rho(e)\\
            & = \min_{\rho\in \mathcal{C}(\edp_T)} T \sum_{e^i\in E^i}\sum_{e:[e]_i = e^i} \edp_T(e) \Bigl(\rho(e)u(\sigma^\ast(e^i), 1) + (1-\rho(e))u(\sigma^\ast(e^i), 0)\Bigr).
        \end{aligned}
    \end{equation*}
    
    For each $\rho\in\mathcal C(\edp_T)$, define $\kappa^\rho:E^i\to[0,1]$ as follows. If $\edp_T(e^i)>0$, let
    \[
        \kappa^\rho(e^i)
        :=
        \frac{
            \sum_{e:[e]_i=e^i}\edp_T(e)\rho(e)
        }{
            \edp_T(e^i)
        };
    \]
    otherwise, set $\kappa^\rho(e^i)=0$. By the calibration constraint, whenever $\edp_T(e^i)>0$,
    \[
        \left|\kappa^\rho(e^i)-e^i\right|\leq \varepsilon_T.
    \]
    Then, 
    \begin{equation*}
        \robust \geq \min_{\rho\in \mathcal{C}(\edp_T)} T\ \sum_{e^i\in E^i}\edp_T(e^i)
        \Bigl(\kappa^\rho(e^i)u(\sigma^\ast(e^i), 1) + (1-\kappa^\rho(e^i))u(\sigma^\ast(e^i), 0)\Bigr).
    \end{equation*}
    
    Next, for the realized state sequence, define $\kappa:E^i\to[0,1]$ as follows. If $\edp_T(e^i)>0$, let
    \[
        \kappa(e^i)
        :=
        \frac{
            \sum_{t\leq T}\indicate{e_t^i=e^i}\indicate{\omega_t=1}
        }{
            \sum_{t\leq T}\indicate{e_t^i=e^i}
        };
    \]
    otherwise, set $\kappa(e^i)=0$. The calibration constraint~\eqref{eq_calib_deter} leads to
    \[
        \left|\kappa(e^i)-e^i\right|\leq \varepsilon_T.
    \]
    Moreover, by the definition of $\sigma^\ast$,
    \[
        U^i
        =
        T\sum_{e^i\in E^i}
        \edp_T(e^i)
        \Bigl(
            \kappa(e^i)u(\sigma^\ast(e^i),1)
            +
            \bigl(1-\kappa(e^i)\bigr)u(\sigma^\ast(e^i),0)
        \Bigr).
    \]
    
    Using the preceding lower bound on $\robust$ and the above representation of $U^i$, it follows that
    \begin{align*}
        \robust-U^i
        &\geq
        \min_{\rho\in\mathcal C(\edp_T)}
        T\sum_{e^i\in E^i}
        \edp_T(e^i)
        \bigl(\kappa^\rho(e^i)-\kappa(e^i)\bigr)
        \bigl(
            u(\sigma^\ast(e^i),1)-u(\sigma^\ast(e^i),0)
        \bigr)\\
        &\geq
        -T
        \sum_{e^i\in E^i}
        \edp_T(e^i)
        \left|\kappa^\rho(e^i)-\kappa(e^i)\right|
        \left|
            u(\sigma^\ast(e^i),1)-u(\sigma^\ast(e^i),0)
        \right|\\
        &\geq
        -4\varepsilon_T U T.
    \end{align*}
    where the last inequality holds because $\vert\kappa^\rho(e^i) - \kappa(e^i)\vert\leq 2\varepsilon_T$ and $\|u_1 - u_0\|_\infty\leq 2U$. Since the above inequality holds for every expert $i\in[n]$, taking the maximum over $i$ yields
    \begin{equation*}
        \robust \geq \oih -4\varepsilon_TU T.
    \end{equation*}
\end{proof}

\subsection{Proof of Section~\ref{sec:online_forecast_feedback}}\label{ap:proof_of_section_forecast_feecback}

\begin{proof}[Proof of Lemma~\ref{lem:iterated_expectation_full}]
    Fix any $\rho\in\mathcal C(\mathbf P)$. By the tower property of conditional expectation,
    \[
    \mathbb E\!\left[
        \sum_{t=1}^T x_t(e_t)^\top v_\rho(e_t)
    \right]
    =
    \mathbb E\!\left[
        \mathbb E\!\left[
            \sum_{t=1}^T x_t(e_t)^\top v_\rho(e_t)
            \mid \mathcal F_{T-1}
        \right]
    \right].
    \]
    For each $t\leq T-1$, the random variable $x_t(e_t)^\top v_\rho(e_t)$ is $\mathcal F_{T-1}$-measurable. Moreover, $x_T$ is $\mathcal F_{T-1}$-measurable, while $e_T$ is independent of $\mathcal F_{T-1}$. Hence,
    \[
    \mathbb E\!\left[
        \sum_{t=1}^T x_t(e_t)^\top v_\rho(e_t)
        \mid \mathcal F_{T-1}
    \right]
    =
    \sum_{t=1}^{T-1} x_t(e_t)^\top v_\rho(e_t)
    +
    \sum_{e\in E}\mathbf P(e)\,x_T(e)^\top v_\rho(e).
    \]
    Taking expectations on both sides gives
    \begin{align*}
        \mathbb E\!\left[
            \sum_{t=1}^T x_t(e_t)^\top v_\rho(e_t)
        \right]
        =
        \mathbb E\!\left[
            \sum_{t=1}^{T-1} x_t(e_t)^\top v_\rho(e_t)
        \right] +
        \mathbb E\!\left[
            \sum_{e\in E}\mathbf P(e)\,x_T(e)^\top v_\rho(e)
        \right].
    \end{align*}
    Repeating the same argument with respect to $\mathcal F_{T-2},\ldots,\mathcal F_0$, we obtain
    \[
        \mathbb E\!\left[
            \sum_{t=1}^T x_t(e_t)^\top v_\rho(e_t)
        \right]
        =
        \sum_{t=1}^T
        \mathbb E\!\left[
            \sum_{e\in E}\mathbf P(e)\,x_t(e)^\top v_\rho(e)
        \right].
    \]
    The representation in~\eqref{eq:regret_equiv_form_full} then follows from the definition of robust regret and the robust-maxmin benchmark.
\end{proof}

\begin{proof}[Proof of Lemma~\ref{lem:regret_decomposition_full}]
    Since $x_t$ generated by the PRA algorithm is $\mathcal{F}_{t-1}$-measurable, by Lemma~\ref{lem:iterated_expectation_full}:
    \begin{equation*}
        \regret
    =
    T\cdot V(\mathbf{P})
    -
    \min_{\rho\in \mathcal{C}(\mathbf{P})}
    \sum_{t=1}^T
    \mathbb{E}\!\left[
    \sum_{e\in E}\mathbf{P}(e)\,x_t(e)^\top v_\rho(e)
    \right].
    \end{equation*}
    For any $t$ and any $\rho\in\mathcal{C}(\mathbf{P})$,
    \[
    \sum_{e\in E}\mathbf{P}(e)\,x_t(e)^\top v_\rho(e)
    \;\ge\;
    \min_{\rho'\in\mathcal{C}(\mathbf{P})}
    \sum_{e\in E}\mathbf{P}(e)\,x_t(e)^\top v_{\rho'}(e)
    =
    W_{\mathbf{P}}(x_t).
    \]
    Hence
    \[
    \min_{\rho\in\mathcal{C}(\mathbf{P})}
    \sum_{t=1}^T\mathbb{E}\!\left[\sum_{e\in E}\mathbf{P}(e)\,x_t(e)^\top v_\rho(e)\right]
    \;\ge\;
    \sum_{t=1}^T \mathbb{E}\!\Bigl[W_{\mathbf{P}}(x_t)\Bigr].
    \]
    Plugging this into the robust regret gives
    \[
    \regret
    \le
    \sum_{t=1}^T \mathbb{E}\!\Bigl[V(\mathbf{P})-W_{\mathbf{P}}(x_t)\Bigr].
    \]
    Now add and subtract $\widetilde{V}^{\gamma}(\hat{\mathbf{P}}_{t-1})$ and use $\widetilde{W}^\gamma_{\hat{\mathbf{P}}_{t-1}}(x_t)=\widetilde{V}^\gamma(\hat{\mathbf{P}}_{t-1})$:
    \begin{equation*}
        \begin{aligned}
            \regret
    &\le \sum_{t=1}^T \mathbb{E}\!\Bigl[\bigl(V(\mathbf{P})-\widetilde{V}^\gamma(\hat{\mathbf{P}}_{t-1})\bigr)+\bigl(\widetilde{W}^\gamma_{\hat{\mathbf{P}}_{t-1}}(x_t)-W_{\mathbf{P}}(x_t)\bigr)\Bigr]\\
            &= \sum_{t=1}^T
    \mathbb{E}\!\left[V(\mathbf{P})-\widetilde{V}^\gamma(\hat{\mathbf{P}}_{t-1})\right]
    +
    \sum_{t=1}^T
    \mathbb{E}\!\left[\widetilde{W}^\gamma_{\hat{\mathbf{P}}_{t-1}}(x_t)-W_{\mathbf{P}}(x_t)\right].
        \end{aligned}
    \end{equation*}
\end{proof}

\begin{proof}[Proof of Lemma~\ref{lem:equiv_penalized_origin}]
    First, we introduce some notations to give cleaner expressions of $W_{\mathbf{P}}(x)$ and $\widetilde{W}^\gamma_{\mathbf{P}}(x)$. This expression of $\widetilde{W}^\gamma_{\mathbf{P}}(x)$ will also be useful for establishing its Lipschitz continuity. 
    
    
    Fix a decision rule $x\in (\Delta(\mathcal{A}))^m$. Define $d,c\in  \mathbb{R}^m$ by
    \begin{equation*}
        d(e) := x(e)^\top (u_1 - u_0),\qquad c(e) := x(e)^\top u_0,
    \end{equation*}
    where $u_1=(u(a, 1))_{a\in \mathcal{A}}$ and $u_0=(u(a, 0))_{a\in \mathcal{A}}$. Since $|u|\leq U$, it follows that
    \begin{equation*}
        \|d\|_\infty \leq 2U,\quad \|c\|_\infty\leq U.
    \end{equation*}
    For any $\rho: E\to [0, 1]$, define $q\in \mathbb{R}^m$ by 
    \begin{equation*}
        q(e):=\mathbf{P}(e)\rho(e),\quad \forall e\in E.
    \end{equation*}
    Then, the objective can be written as
    \begin{equation*}
        \sum_{e\in E}\mathbf{P}(e)\, x(e)^\top \bigl(\rho(e)u_1 + (1-\rho(e))u_0\bigr)
        =
        \mathbf{P}^\top c  + d^\top q.
    \end{equation*}


    Introduce the index set $\mathcal{J}$ and the incidence matrix $A$ defined in the proof of Theorem~\ref{prop_compute_RMM}. Define the right-hand side vector $b(\mathbf{P})\in\mathbb{R}^r$ by
    \begin{equation*}
        b_{(i,\zeta)}(\mathbf{P})
        :=
        \Bigl(\sum_{e:[e]_i=\zeta}\mathbf{P}(e)\Bigr)\cdot \zeta.
    \end{equation*}
    Define the calibration error vector $\epsilon(\mathbf{P})\in\mathbb{R}^r$ by
    \begin{equation*}
        \epsilon_{(i,\zeta)}(\mathbf{P})
        :=
        \Bigl(\sum_{e:[e]_i=\zeta}\mathbf{P}(e)\Bigr)\cdot \varepsilon_T.
    \end{equation*}
    Then, the calibration constraint becomes
    \begin{gather*}
        Aq \leq b(\mathbf{P}) + \epsilon(\mathbf{P}),\quad  -Aq \leq \epsilon(\mathbf{P}) - b(\mathbf{P}),\quad
        \mathbf0\leq q\leq \mathbf{P}.
    \end{gather*}
    Denote the feasible set of such $q\in \mathbb{R}^m$ by $\mathcal{F}(\mathbf{P})$, which is a polyhedron. Therefore, 
    \begin{equation}\label{eq_lp_format}
        W_\mathbf{P}(x) = \mathbf{P}^\top c + \min_{q\in\mathcal{F}(\mathbf{P})} d^\top q.
    \end{equation}

    For any $\rho\in [0, 1]^m$, the calibration violation $\operatorname{viol}_\mathbf{P}(\rho)$ equals $\|Aq - b(\mathbf{P})\|_1$. Hence, 
    \begin{equation}\label{eq:simple_format_penal}
        \widetilde{W}_\mathbf{P}^\gamma(x) = \mathbf{P}^\top c + \min_{0\leq q\le \mathbf{P}} \bigl\{d^\top q + \gamma \|Aq-b(\mathbf{P})\|_1\bigr\}.
    \end{equation}

    Then, we prove the results by applying the Hoffman lemma to bound the distance between $q$ and $\mathcal{F}(\mathbf{P})$ with the calibration violation $\operatorname{viol}_\mathbf{P}(\rho)$.

    By the Hoffman lemma, there exists a constant $H>0$ depending only on the constraint matrix such that, whenever $\mathcal F(\mathbf P)$ is nonempty, for any $q\in \mathbb{R}^m$ with $\mathbf0\leq q\leq \mathbf{P}$, there exists a $\bar{q}\in \mathcal{F}(\mathbf{P})$ satisfying that
    \begin{equation*}
        \begin{aligned}
            \|q-\bar{q}\|_1&\leq H\bigl(\|(Aq - b(\mathbf{P}) - \epsilon(\mathbf{P}))_+\|_1 + (-Aq + b(\mathbf{P}) - \epsilon(\mathbf{P}))_+\|_1\bigr)\\
            &\leq 2H \bigl( \|Aq - b(\mathbf{P})\|_1 + \|\epsilon(\mathbf{P})\|_1 \bigr)\\
            &= 2H \bigl( \|Aq - b(\mathbf{P})\|_1 + n\varepsilon_T \bigr).
        \end{aligned}
    \end{equation*}
    Then, for any $q\in \mathbb{R}^m$ with $\mathbf0\leq q\leq \mathbf{P}$:
    \begin{equation*}
        d^\top \bar{q}\leq d^\top q + \|d\|_\infty \|q-\bar{q}\|_1\leq d^\top q + 4UH\bigl(\|Aq-b(\mathbf{P})\|_1 + n\varepsilon_T\bigr).
    \end{equation*}
    Since $\bar{q}\in \mathcal{F}(\mathbf{P})$, by the definition~\eqref{eq_lp_format} of $W_\mathbf{P}(x)$, 
    \begin{equation*}
        \begin{aligned}
            W_\mathbf{P}(x)\leq \mathbf{P}^\top c + d^\top \bar{q} \leq \mathbf{P}^\top c + d^\top q + 4UH\bigl(\|Aq-b(\mathbf{P})\|_1 + n\varepsilon_T\bigr).
        \end{aligned}
    \end{equation*}
    Since $\gamma\ge 4UH$, we further have
    \begin{equation*}
        W_\mathbf{P}(x) \le \mathbf{P}^\top c + d^\top q + \gamma \bigl(\|Aq-b(\mathbf{P})\|_1 + n\varepsilon_T\bigr).
    \end{equation*}
    Taking minimum over all $q\in \mathbb{R}^m$ with $ 0\leq q\leq \mathbf{P}$ leads to
    \begin{equation*}
        W_\mathbf{P}(x) \leq \widetilde{W}_\mathbf{P}^\gamma (x) + n\gamma \varepsilon_T.
    \end{equation*}
    Combined with the direct property~\eqref{eq:direct_property_1} that $ \widetilde{W}_\mathbf{P}^\gamma (x)\le W_\mathbf{P}(x) + n\gamma \varepsilon_T$, we prove that 
    \begin{equation*}
        \bigl|\widetilde{W}_\mathbf{P}^\gamma (x)- W_\mathbf{P}(x)\bigr|\leq n\gamma \varepsilon_T,\qquad \forall x\in (\Delta(\mathcal{A}))^m.
        \end{equation*}
        Taking maximum over all $x\in (\Delta(\mathcal{A}))^m$ and using $\bigl|\max_x f(x)-\max_x g(x)\bigr|\leq \sup_x |f(x)-g(x)|$ leads to
        \begin{equation*}
            \bigl|\widetilde{V}^\gamma(\mathbf{P}) - V(\mathbf{P})\bigr|\leq n\gamma\varepsilon_T.
        \end{equation*}
\end{proof}

\begin{proof}[Proof of Lemma~\ref{lem_lipschitz}]
    Fix any decision rule $x\in(\Delta(\mathcal A))^m$. For a distribution $\mathbf Q\in\Delta(E)$ and a posterior mapping $\rho\in[0,1]^m$, define
    \[
        \Phi_{\mathbf Q}(x,\rho)
        :=
        \sum_{e\in E}\mathbf Q(e)x(e)^\top v_\rho(e)
        +
        \gamma\operatorname{viol}_{\mathbf Q}(\rho).
    \]
    Then
    \[
        \widetilde W_{\mathbf Q}^{\gamma}(x)
        =
        \min_{\rho\in[0,1]^m}\Phi_{\mathbf Q}(x,\rho).
    \]
    We first bound the pointwise change in $\Phi_{\mathbf Q}(x,\rho)$ when $\mathbf Q$ is replaced by $\mathbf Q'$. Since $x(e)^\top v_\rho(e)$ is a convex combination of payoffs and $|u|\le U$, we have
    \[
        \left|x(e)^\top v_\rho(e)\right|\le U,
        \qquad \forall e\in E.
    \]
    Hence,
    \[
        \left|
        \sum_{e\in E}\bigl(\mathbf Q(e)-\mathbf Q'(e)\bigr)x(e)^\top v_\rho(e)
        \right|
        \le
        U\|\mathbf Q-\mathbf Q'\|_1.
    \]
    Next, by the definition of $\operatorname{viol}_{\mathbf Q}$ and the triangle inequality,
    \begin{align*}
        &\left|
        \operatorname{viol}_{\mathbf Q}(\rho)
        -
        \operatorname{viol}_{\mathbf Q'}(\rho)
        \right| \\
        &\quad\le
        \sum_{i\in[n]}\sum_{e^i\in E^i}
        \left|
        \sum_{e:\,[e]_i=e^i}
        \bigl(\mathbf Q(e)-\mathbf Q'(e)\bigr)
        \bigl(\rho(e)-e^i\bigr)
        \right| \\
        &\quad\le
        \sum_{i\in[n]}\sum_{e^i\in E^i}
        \sum_{e:\,[e]_i=e^i}
        \left|\mathbf Q(e)-\mathbf Q'(e)\right|
        \left|\rho(e)-e^i\right| \\
        &\quad\le
        \sum_{i\in[n]}\sum_{e\in E}
        \left|\mathbf Q(e)-\mathbf Q'(e)\right|
        =
        n\|\mathbf Q-\mathbf Q'\|_1,
    \end{align*}
    where the last inequality uses $\rho(e),e^i\in[0,1]$ and the fact that, for each fixed expert $i$, every profile $e$ contributes to exactly one forecast value $e^i=[e]_i$. Therefore, uniformly over $x$ and $\rho$,
    \begin{equation}\label{eq:pointwise_lipschitz_penalized_objective}
        \left|
        \Phi_{\mathbf Q}(x,\rho)-\Phi_{\mathbf Q'}(x,\rho)
        \right|
        \le
        (U+n\gamma)\|\mathbf Q-\mathbf Q'\|_1.
    \end{equation}

    We now pass this pointwise bound to the minimized value. For any two functions $f$ and $g$ on the same domain,
    \[
        \left|\min_z f(z)-\min_z g(z)\right|
        \le
        \sup_z |f(z)-g(z)|.
    \]
    Applying this inequality to $\Phi_{\mathbf Q}(x,\cdot)$ and $\Phi_{\mathbf Q'}(x,\cdot)$, together with~\eqref{eq:pointwise_lipschitz_penalized_objective}, gives
    \[
        \left|
        \widetilde W_{\mathbf Q}^{\gamma}(x)
        -
        \widetilde W_{\mathbf Q'}^{\gamma}(x)
        \right|
        \le
        (U+n\gamma)\|\mathbf Q-\mathbf Q'\|_1.
    \]
    Since the bound is independent of $x$, we obtain
    \[
        \sup_{x\in(\Delta(\mathcal A))^m}
        \left|
        \widetilde W_{\mathbf Q}^{\gamma}(x)
        -
        \widetilde W_{\mathbf Q'}^{\gamma}(x)
        \right|
        \le
        (U+n\gamma)\|\mathbf Q-\mathbf Q'\|_1.
    \]

    Finally, using
    \[
        \left|\max_x f(x)-\max_x g(x)\right|
        \le
        \sup_x |f(x)-g(x)|,
    \]
    we get
    \[
        \left|
        \widetilde V^\gamma(\mathbf Q)-\widetilde V^\gamma(\mathbf Q')
        \right|
        \le
        \sup_{x\in(\Delta(\mathcal A))^m}
        \left|
        \widetilde W_{\mathbf Q}^{\gamma}(x)
        -
        \widetilde W_{\mathbf Q'}^{\gamma}(x)
        \right|
        \le
        (U+n\gamma)\|\mathbf Q-\mathbf Q'\|_1.
    \]
\end{proof}

\begin{proof}[Proof of Lemma~\ref{lem_esti_error}]
For $t=1$, since $\hat{\mathbf{P}}_0(e) = 0$ for all $e\in E$, 
\begin{equation*}
    \mathbb{E}\bigl[\|\hat{\mathbf{P}}_{t-1}-\mathbf{P}\|_1\bigr]  = \mathbb{E}\bigl[\|\mathbf{P}\|_1\bigr] = 1.
\end{equation*}

For $t\geq 2$, define the Pearson $\chi^2$-divergence
\[
\chi^2(\hat{\mathbf{P}}_{t-1},\mathbf{P})
:=
\sum_{e:\mathbf{P}(e)>0}\frac{(\hat{\mathbf{P}}_{t-1}(e)-\mathbf{P}(e))^2}{\mathbf{P}(e)}.
\]
Since $\hat{\mathbf{P}}_{t-1}(e)$ is the empirical mean of a Bernoulli distribution with the mean $\mathbf{P}(e)$ and $\{e_\tau\}_{\tau\leq t-1}$ is independently sampled,
\[
    \mathrm{Var}(\hat{\mathbf{P}}_{t-1}(e))=\frac{\mathbf{P}(e)(1-\mathbf{P}(e))}{t-1}.
\]
Thus, 
\[
    \mathbb{E}\!\left[\chi^2(\hat{\mathbf{P}}_{t-1},\mathbf{P})\right]
    =
    \sum_{e:\mathbf{P}(e)>0}\frac{\mathrm{Var}(\hat{\mathbf{P}}_{t-1}(e))}{\mathbf{P}(e)}
    =
    \frac{1}{t-1}\sum_{e:\mathbf{P}(e)>0}(1-\mathbf{P}(e))
    \leq
    \frac{m-1}{t-1}.
\]
If $\mathbf{P}(e) = 0$ for some $e\in E$, then the corresponding $\hat{\mathbf{P}}_{t-1}$ also equals 0. Hence, 
\begin{equation*}
    \|\hat{\mathbf{P}}_{t-1}-\mathbf{P}\|_1
=
\sum_{e:\mathbf{P}(e)>0}|\hat{\mathbf{P}}_{t-1}(e)-\mathbf{P}(e)|
\end{equation*}
By the Cauchy--Schwarz inequality, 
\begin{equation*}
    \begin{aligned}
        \|\hat{\mathbf{P}}_{t-1}-\mathbf{P}\|_1
        =
        \sum_{e:\mathbf{P}(e)>0}|\hat{\mathbf{P}}_{t-1}(e)-\mathbf{P}(e)|
        \le
        \sqrt{\sum_{e:\mathbf{P}(e)>0}\mathbf{P}(e)}\;
        \sqrt{\chi^2(\hat{\mathbf{P}}_{t-1},\mathbf{P})}
        =
        \sqrt{\chi^2(\hat{\mathbf{P}}_{t-1},\mathbf{P})}.
    \end{aligned}
\end{equation*}
Taking expectations and applying Jensen's inequality gives
\[
    \mathbb{E}\|\hat{\mathbf{P}}_{t-1}-\mathbf{P}\|_1
    \le
    \sqrt{\mathbb{E}\chi^2(\hat{\mathbf{P}}_{t-1},\mathbf{P})}
    \le
    \sqrt{\frac{m-1}{t-1}}.
\]
\end{proof}

\subsection{Proof of Section~\ref{sec:online_state_feedback}}\label{ap:proof_of_online_state_feedback}

\begin{proof}[Proof of Theorem~\ref{thm_contextual_regret_full_feedback}]
    For each forecast profile $e\in E$, let
    \[
        T_e := \sum_{t=1}^T \mathbf{1}\{e_t=e\}
    \]
    denote the number of days in which context $e$ appears. For rewards in $[-U,U]$, exponential weights with a fixed learning rate $\eta$ satisfies, on the subsequence for context $e$,
    \[
        \max_{a\in\mathcal A}\sum_{t:e_t=e}u(a,\omega_t)
        -
        \mathbb E\left[\sum_{t:e_t=e}u(a_t,\omega_t)\right]
        \le
        \frac{\log d}{\eta}+2\eta U^2T_e.
    \]
    Summing over $e\in E$ gives
    \[
        \mathrm{Reg}^\prime
        \le
        \frac{m\log d}{\eta}+2\eta U^2T.
    \]
    With $\eta=\frac{1}{U}\sqrt{\frac{m\log d}{2T}}$, this becomes
    \[
        \mathrm{Reg}^\prime
        \le
        2U\sqrt{2mT\log d}.
    \]
\end{proof}

\end{document}